  \providecommand\BibTeX{{%
    \normalfont B\kern-0.5em{\scshape i\kern-0.25em b}\kern-0.8em\TeX}}}
\newcommand{\datasetA}{MovieLens\xspace}
\newcommand{\datasetB}{Epinion\xspace}
\newcommand{\datasetC}{Gowalla\xspace}
\newcommand{\datasetD}{LastFM\xspace}
\newcommand{\usergroupA}{active\xspace}
\newcommand{\usergroupB}{inactive\xspace}
\newcommand{\itemgroupA}{short-head\xspace}
\newcommand{\itemgroupB}{long-tail\xspace}
\newcommand{\dquotes}[1]{``#1''}
\newcommand{\squotes}[1]{`#1'}
\newcommand{\eg}{e.g., }
\newcommand{\ie}{i.e., }
\newcommand{\wrt}{w.r.t.~}
\newcommand{\partitle}[1]{\vspace{2mm}\noindent\textbf{#1}}
\DeclareMathOperator{\E}{\mathbb{E}}
\begin{document}

%%
%% The "title" command has an optional parameter,
%% allowing the author to define a "short title" to be used in page headers.
\title{CPFair: Personalized Consumer and Producer Fairness Re-ranking for Recommender Systems}

%%
%% The "author" command and its associated commands are used to define
%% the authors and their affiliations.
%% Of note is the shared affiliation of the first two authors, and the
%% "authornote" and "authornotemark" commands
%% used to denote shared contribution to the research.

\author{Mohammadmehdi Naghiaei}
\affiliation{%
  \institution{University of Southern California}
  \city{California}
  \country{USA}}
\email{naghiaei@usc.edu}

\author{Hossein A.~Rahmani}
\affiliation{%
  \institution{University College London}
  \city{London}
  \country{United Kingdom}}
\email{h.rahmani@ucl.ac.uk}

\author{Yashar Deldjoo}
\affiliation{%
  \institution{Polytechnic University of Bari}
  \city{Bari}
  \country{Italy}}
\email{yashar.deldjoo@poliba.it}

%%
%% By default, the full list of authors will be used in the page
%% headers. Often, this list is too long, and will overlap
%% other information printed in the page headers. This command allows
%% the author to define a more concise list
%% of authors' names for this purpose.
\renewcommand{\shortauthors}{M.~Naghiaei, H.~A.~Rahmani, Y.~Deldjoo}

%%
%% The abstract is a short summary of the work to be presented in the
%% article.
\begin{abstract}
Recently, there has been a rising awareness that when machine learning (ML) algorithms are used to automate choices, they may treat/affect individuals unfairly, with legal, ethical, or economic consequences. Recommender systems are prominent examples of such ML systems that assist users in making high-stakes judgments.

A common trend in the previous literature research on fairness in recommender systems is that the majority of works treat user and item fairness concerns separately, ignoring the fact that recommender systems operate in a two-sided marketplace. In this work, we present an optimization-based re-ranking approach that seamlessly integrates fairness constraints from both the consumer and producer-side in a joint objective framework. We demonstrate through large-scale experiments on 8 datasets that our proposed method is capable of improving both consumer and producer fairness without reducing overall recommendation quality, demonstrating the role algorithms may play in minimizing data biases.
\end{abstract}

%%
%% The code below is generated by the tool at http://dl.acm.org/ccs.cfm.
%% Please copy and paste the code instead of the example below.
%%
% \begin{CCSXML}
% <ccs2012>
%  <concept>
%   <concept_id>10010520.10010553.10010562</concept_id>
%   <concept_desc>Computer systems organization~Embedded systems</concept_desc>
%   <concept_significance>500</concept_significance>
%  </concept>
%  <concept>
%   <concept_id>10010520.10010575.10010755</concept_id>
%   <concept_desc>Computer systems organization~Redundancy</concept_desc>
%   <concept_significance>300</concept_significance>
%  </concept>
%  <concept>
%   <concept_id>10010520.10010553.10010554</concept_id>
%   <concept_desc>Computer systems organization~Robotics</concept_desc>
%   <concept_significance>100</concept_significance>
%  </concept>
%  <concept>
%   <concept_id>10003033.10003083.10003095</concept_id>
%   <concept_desc>Networks~Network reliability</concept_desc>
%   <concept_significance>100</concept_significance>
%  </concept>
% </ccs2012>
% \end{CCSXML}

\begin{CCSXML}
<ccs2012>
  <concept>
      <concept_id>10002951.10003317.10003347.10003350</concept_id>
      <concept_desc>Information systems~Recommender systems</concept_desc>
      <concept_significance>500</concept_significance>
      </concept>
</ccs2012>
\end{CCSXML}

\ccsdesc[500]{Information systems~Recommender systems}

% \ccsdesc[500]{Computer systems organization~Embedded systems}
% \ccsdesc[300]{Computer systems organization~Redundancy}
% \ccsdesc{Computer systems organization~Robotics}
% \ccsdesc[100]{Networks~Network reliability}

%%
%% Keywords. The author(s) should pick words that accurately describe
%% the work being presented. Separate the keywords with commas.
\keywords{Recommendation System, Two-sided Fairness, Fair Re-ranking}

%%
%% This command processes the author and affiliation and title
%% information and builds the first part of the formatted document.
\maketitle

\section{Introduction}
\label{sec:introduction}

\begin{figure}[t]
    \centering
    \includegraphics[scale=0.4]{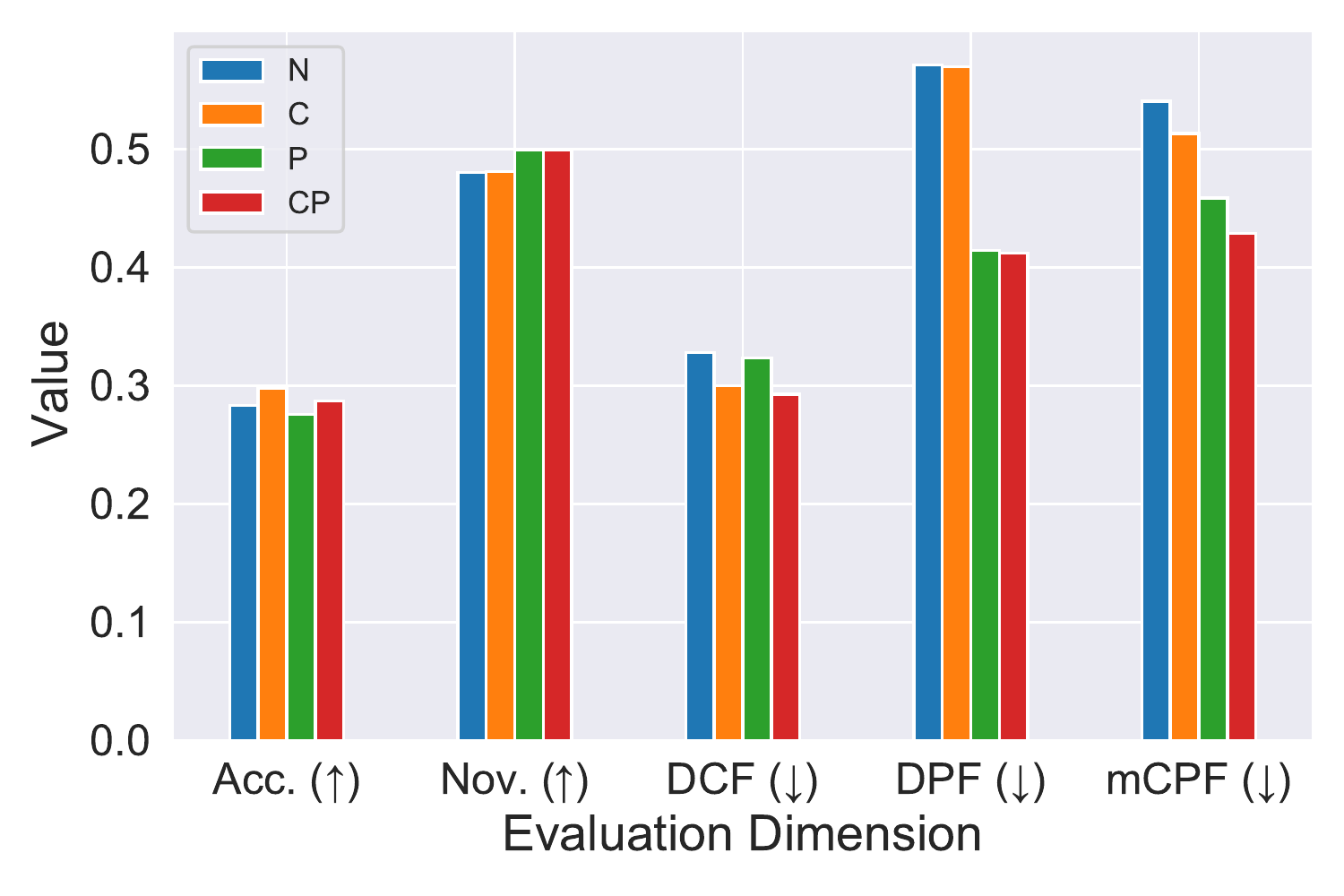}\caption{Fairness-unaware (N), user-oriented (C), item-oriented (P), and two-sided fairness-performance (CP) of recommendation algorithms on accuracy, novelty, and fairness performance. Consumer fairness evaluation (DCF) and producer fairness evaluation (DPF) are the two metrics that make up the mCPF. Note that CP represents the core of our contribution. The results on each bar show the average of 32 experiments across datasets and baseline CF models.}
    \label{fig:introresults}
\end{figure}

Recommender systems are ubiquitous and support high-stakes decisions in a variety of application contexts such as online marketing, music discovery, and high-stake tasks (\eg job search, candidate screening, and loan applications). By delivering personalized suggestions, these systems have a major impact on the content we consume online, our beliefs and decisions, and the success of businesses. Recently, there has been a growing awareness about the fairness of machine learning (ML) models in automated decision-making tasks, such as classification \cite{mehrabi2021survey,binns2018fairness}, and ranking/filtering tasks \cite{chen2020bias,caton2020fairness}, with recommender systems serving as a notable example of the latter. However, unlike classification, fairness in recommender systems (RS) is a multi-faceted subject, depending on stakeholder, type of benefit, context, morality, and time among others~\cite{ekstrand2021fairness,deldjoo2021flexible,burke2017multisided}.

A recurring theme in the fairness recommendation literature is that computing research is often focused on a particular aspect of fairness, framing the problem as a building algorithm that fulfills a specific criterion, such as optimizing for producer fairness~\cite{dong2020user,yalcin2021investigating} or consumer fairness~\cite{dwork2012fairness,islam2021debiasing}. However, this strategy may appear oversimplified, requiring more nuanced and multidisciplinary approaches to research recommendation fairness.

One overarching aspect that may help to unify studies on fairness in multi-stakeholder settings is distinguishing the \textit{(i)} benefit type (exposure vs.~relevance), and (\textit{ii)} the main stakeholders involved in the recommendation setting (consumers vs. producers). Exposure refers to how evenly items or groups of items are exposed to users or groups of users. Relevance (effectiveness) determines to what extent the items exposition is effective, \ie matches with the user's taste. The second level of the taxonomy is stakeholder. Almost every online platform we interact with (\eg Spotify, Amazon) serves as a marketplace connecting consumers and item producers/service providers, making them the primary end-beneficiaries and stakeholders in RS. From the consumers' perspective, fairness is about distributing effectiveness evenly among users, whereas producers and item providers seeking increased visibility are primarily concerned with exposure fairness. 

\begin{figure}
    \centering
    \includegraphics[scale=0.5]{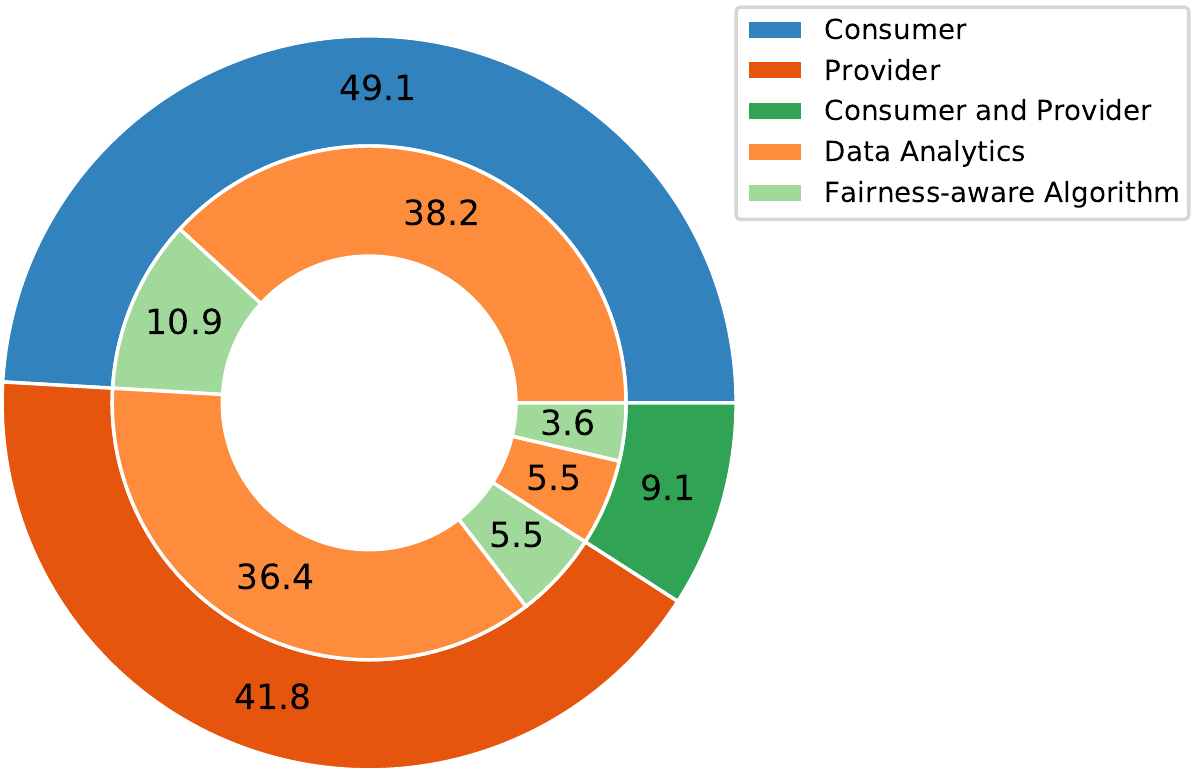}
    \caption{The percentage of the research studied different aspects of fairness in recommender systems.}
    \label{fig:paperstatistics}
\end{figure}

Figure \ref{fig:paperstatistics} illustrates the distribution of research on various aspects of fairness in recommender systems.\footnote{The figures are based on 120 publications retrieved from DBLP using the keywords ``fair/biased recommendation'', ``re-ranking'', and ``collaborative filtering''.} We may observe a division/split in the research on fairness-aware recommendation algorithms, with around 49.1\% of papers concentrating on consumer fairness and somewhat fewer on producer fairness (41.8\%). Few studies (less than 10\%) address consumer and producer fairness concerns simultaneously. However, the underlying user-item interaction coexists on (and can impact) both sides of beneficiary stakeholders. For example, there may be disparities in the consumption of item groups (defined by protected attributes) between active and inactive users in specific domains. Active users of POI may want to visit a variety of popular and less popular locations, whilst less active users may want to visit mainly the most popular locations. Prior research on unilateral fairness optimization raises various questions, including whether optimizing for one party's advantages, such as consumers, may ensure/protect the interests of other parties (\ie producers) and vice versa? Furthermore, how does optimizing for one stakeholder's benefits (\eg consumers' relevance) affect overall system accuracy? Are there ways to explain the key interplay and trade-offs between fairness restrictions (user relevance, item exposure) and overall system accuracy in various recommendation scenarios? Are there ways to explain the key interplay and trade-offs between fairness restrictions (user relevance, item exposure) and overall system accuracy in various recommendation scenarios?

To fill this gap and address the above questions, in this work, we focus our attention on fairness in a \textit{marketplace} context, or more accurately, \textit{multi-stakeholder setting}, and aim to unify the two aspects of fairness, consumer and provider fairness. To this end, we propose a model-agnostic re-ranking approach capable of jointly optimizing the \textit{marketplace} objectives, \ie giving fair exposure to producers' items while minimizing disparate treatment against consumer groups without considerable sacrifice on recommender system total accuracy.

To better illustrate the capabilities of our proposed system, in Figure~\ref{fig:introresults}, we compare the performance of various recommendation algorithms, against unilateral (or single-sided) fairness objectives according to the perspectives of different beneficiary stakeholders.

In particular, (N) serves as the baseline fairness unaware recommendation algorithm, whereas (C), (P), and (CP) are three variations of our proposed system, taking into account consumer- and provider-side fairness objectives separately, and both simultaneously. (C) closely resembles the approach proposed by~\citet{li2021user} with the difference that in our formulation, the users' parity objective have been integrated into the system in the form of an unconstrained optimization problem, and a greedy algorithm is used to solve it optimally in polynomial time.  

Figure~\ref{fig:introresults} represents the average results of the 128 experimental cases examined in this work, \ie $8 \ \text{(datasets)} \ \times \ 4 \ \text{(CF baselines)} \ \times \ 4 \ (\text{fairness constraints type})$. One may observe the drawbacks of unilateral fairness optimization, in which DCF and DPF are used to measure unfairness (biases) on the consumer and supplier sides, respectively (cf. Section~\ref{subsec:market_obj}). For example, user-oriented fairness optimization (C-fairness) can enhance both consumer fairness (\ie reducing DCF) and total system accuracy. However, this creates a huge bias against suppliers. Producer-fairness optimization, on the other hand, can improve novelty and producer fairness (\ie reducing DPF), indicating that users' recommendation lists contain a greater number of long-tail items. However, this comes at the cost of a considerable reduction in overall accuracy (from $0.2832$ to $0.2756$) and worsening of DCF (from $0.3280$ to $0.3234$). 
The proposed consumer-provider fairness optimization (CP-fairness) in the present work combines the benefits of previous system variants, enhancing both producer and consumer fairness, resulting in increased accuracy (from $0.2832$ to $0.2868$) and even increased novelty (due to recommendation of more long-tail items). To summarize, the key contributions of this work are as follow:
\begin{itemize}
    \item \textbf{Motivating multi-stakeholder fairness in RS.} We motivate the importance of multi-sided fairness optimization by demonstrating how inherent biases in underlying data may impact both consumer and producer fairness constraints, as well as the overall system's objectives, \ie accuracy and novelty (cf.~Section~\ref{sec:motivating_cp}); these biases and imbalances if left unchecked, could lead to stereotypes, polarization of political opinions, and the loss of emerging business. 
    \item \textbf{CP-Fairness modeling.} Our research builds on and expands earlier work \cite{li2021user}, in several dimensions: \textit{(i)} we consider \textit{multi-stakeholder} objectives including both consumers' and producers' fairness aspects; \textit{(ii)} we formalize the re-ranking problem as an integer programming method without enforcing the fairness objectives as a constraint that is susceptible to reduction in overall accuracy in a \textit{marketplace}; \textit{(iii)} we propose an efficient greedy algorithm capable of achieving an optimal trade off within \textit{multi-stakeholder} objectives in a polynomial-time (cf.~Section~\ref{sec:proposed_method}). In summary, this makes the proposed contribution more versatile and applicable to various recommendation models and settings irrespective of their learning criteria and their recommendation list size.
    \item \textbf{Experiments.} We conduct extensive experimental evaluations on 8 real-world datasets from diverse domains (\eg Music, Movies, E-commerce), utilizing implicit and explicit feedback preferences (cf.~Section~\ref{sec:experimental_methodology}); this, combined with four recommendation algorithm baseline amount the number of experiments carried out to 128 recommendation simulation.
\end{itemize}

\section{Background and Related work}
\label{sec:related_work}
A promising approach to classifying the literature in recommendation fairness is according to the beneficiary stakeholder~\cite{abdollahpouri2020multistakeholder,ekstrand2021fairness}. Based on our survey screening results on papers published in fairness during the past two years, only 9.1\% of publications in the field deal with multi-sided fairness, with even a less percentage $3.6\%$ proposing an actual two-sided fairness-aware algorithm, as pursued in our current study.

\partitle{C-fairness Methods.}
\citet{dwork2012fairness} introduces a framework for individual fairness by including all users in the protected group. \citet{abdollahpouri2019unfairness} and \citet{naghiaei2022unfairness} investigate a user-centered evaluation of popularity bias that accounts for different levels of interest among users toward popular items in movies and book domain. \citet{abdollahpouri2021user} propose a regularization-based framework to mitigate this bias from the user perspective. \citet{li2021user} address the C-fairness problem in e-commerce recommendation from a group fairness perspective, \ie a requirement that protected groups should be treated similarly to the advantaged group or total population \cite{pedreschi2009measuring}. Several recent studies have indicated that an exclusively consumer-centric design approach, in which customers' satisfaction is prioritized over producers' interests, may result in a reduction of the system's overall utility~\cite{patro2020fairrec,wang2021user}. As a result, we include the producer fairness perspectives in our proposed framework as the second objective in this work.

\partitle{P-fairness Methods.}
Several works have studied recommendation fairness from the producer's perspective \cite{yalcin2021investigating,wundervald2021cluster}. \citet{gomez2022provider} assess recommender system algorithms disparate exposure based on producers' continent of production in movie and book recommendation domain and propose an equity-based approach to regulate the exposure of items produced in a continent. \citet{dong2020user} set a constraint to limit the maximum times an item can be recommended among all users proportional to its popularity to enhance the item exposure fairness. In contrast, \citet{ge2021towards} investigate fairness in a dynamic setting where item popularity changes over time in the recommendation process and models the recommendation problem as a constrained Markov Decision Process with a dynamic policy toward fairness.

A common observation in C-fairness and P-Fairness research is the type of attributes used to segment the groups on consumer and provider side. These attributes could be internal, e.g., determined via user-item interactions~\cite{li2021user,abdollahpouri2019managing} or provided externally, e.g., protected attributes such as gender, age, or geographical location~\cite{gomez2022provider,boratto2021interplay}. In this work, we segmented the groups by internal attributes and based on the number of interactions for both consumers and providers.

\partitle{CP-fairness Methods.}
Considering both consumers' and providers' perspectives, \citet{chakraborty2017fair} present mechanisms for CP-fairness in matching platforms such as Airbnb and Uber. \citet{rahmani2022unfairness} studied the interplays and tradeoffs between consumer and producer fairness in Point-of-Interest recommendations. \citet{patro2020fairrec} map fair recommendation problem to the constrained version of the fair allocation problem with indivisible goods and propose an algorithm to recommend top-K items by accounting for producer fairness aspects. \citet{wu2021tfrom} focus on two-sided fairness from an individual-based perspective, where fairness is defined as the same exposure to all producers and the same nDCG to all consumers involved.

\citet{do2021two} define the notion of fairness in increasing the utility of the worse-off individuals following the concept of distributive justice and proposed an algorithm based on maximizing concave welfare functions using the Frank-Wolfe algorithm. \citet{lin2021mitigating} investigate sentiment bias, the bias where recommendation models provide more relevant recommendations on user/item groups with more positive feedback, and its effect on both consumers and producers.

The inner layer of the circle in Figure~\ref{fig:paperstatistics} presents approaches for recommendation fairness that could be classified according to 1) developing metrics to quantify fairness, 2) developing frameworks for producing fair models (according to the desired notion of fairness), or 3) modifying data to combat historical bias. Contrary to the works surveyed, we develop a fairness-aware algorithm that simultaneously addresses user groups and item group fairness objectives jointly via an efficient optimization algorithm capable of increasing the system's overall accuracy.
\section{Motivating CP-Fairness Concerns}
\label{sec:motivating_cp}

\begin{table}
    \centering
    \caption{Percentage of users and items located at different number of interactions thresholds (as $n$ and $r$ represents, respectively) in the training set of the datasets.}
    \begin{adjustbox}{max width=240pt}
    \begin{tabular}{l|llll|llll}
    \toprule
        Dataset & $n\geq10$ & $n\geq20$ & $n\geq50$ & $n\geq100$ & $r\geq10$ & $r\geq20$ & $r\geq50$ & $r\geq100$  \\ \midrule
        MovieLens & 100\% & 82.08\% & 46.98\% & 26.09\% & 77.39\% & 60.49\% & 34.54\% & 16.46\% \\
        Epinion & 99.85\% & 56.60\% & 8.11\% & 1.57\% & 99.81\% & 63.83\% & 17.33\% & 4.61\% \\
        Gowalla & 100\% & 86.11\% & 25.40\% & 4.87\% & 100\% & 84.44\% & 20.10\% & 3.95\% \\
        LastFM & 96.33\% & 72,73\% & 0.00\% & 0.00\% & 73.99\% & 37.36\% & 15.00\% & 4.48\%\\
    \bottomrule
    \end{tabular}
    \label{tbl:dataset_charactristics}
    \end{adjustbox}
\end{table}

In this section, we intend to motivate the need of having two-sided fairness (\ie user and item) by undertaking both data and algorithm analysis. We analyze the distribution and properties of two well-known real-world recommendation datasets, \ie \datasetB and \datasetC, with their details summarized in Table~\ref{tbl:datasets}. Due to space consideration, we only show the analysis on these datasets; however, we observed similar patterns on the other datasets as can be seen in Figure \ref{fig:CPevalBoxPlot} baseline (N) models.

In Table~\ref{tbl:dataset_charactristics}, we show the distribution of users and items with different numbers of interactions in the datasets. As the values show, most users are concentrated in areas with less interaction with the items. We can also note that majority of items have received fewer interactions from the users (\ie long-tail items), and a small percentage of items (\ie popular items) are recommended frequently. This motivates the need to expand the long-tail coverage of recommendation lists while encouraging users with fewer interactions to engage with the system.

\begin{figure*}
  \centering
  \subfloat[Epinion]
    {\includegraphics[scale=0.24]{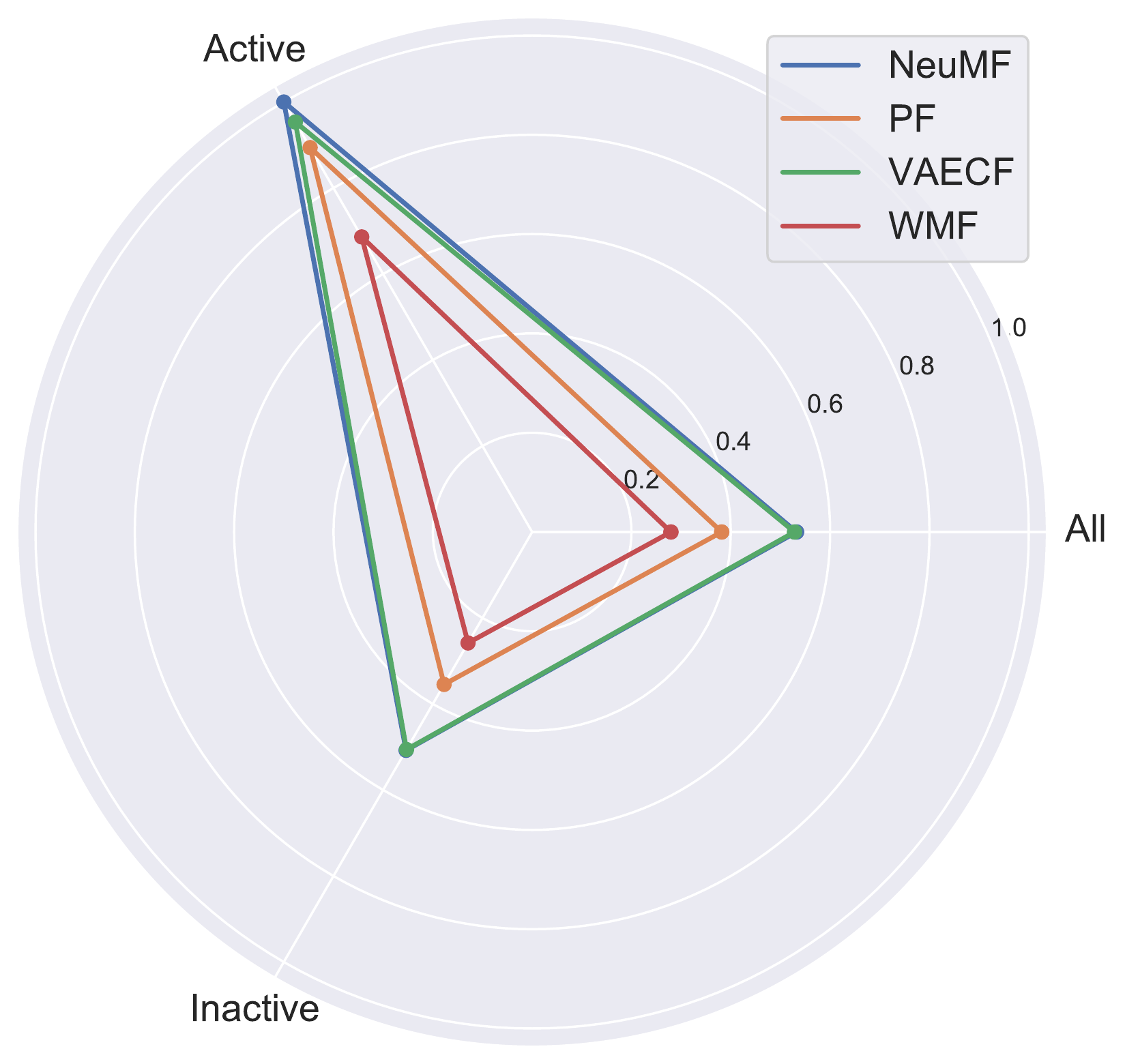}
    \label{fig:radar_user_epinion}}
  \hfill
  \subfloat[Gowalla]
    {\includegraphics[scale=0.24]{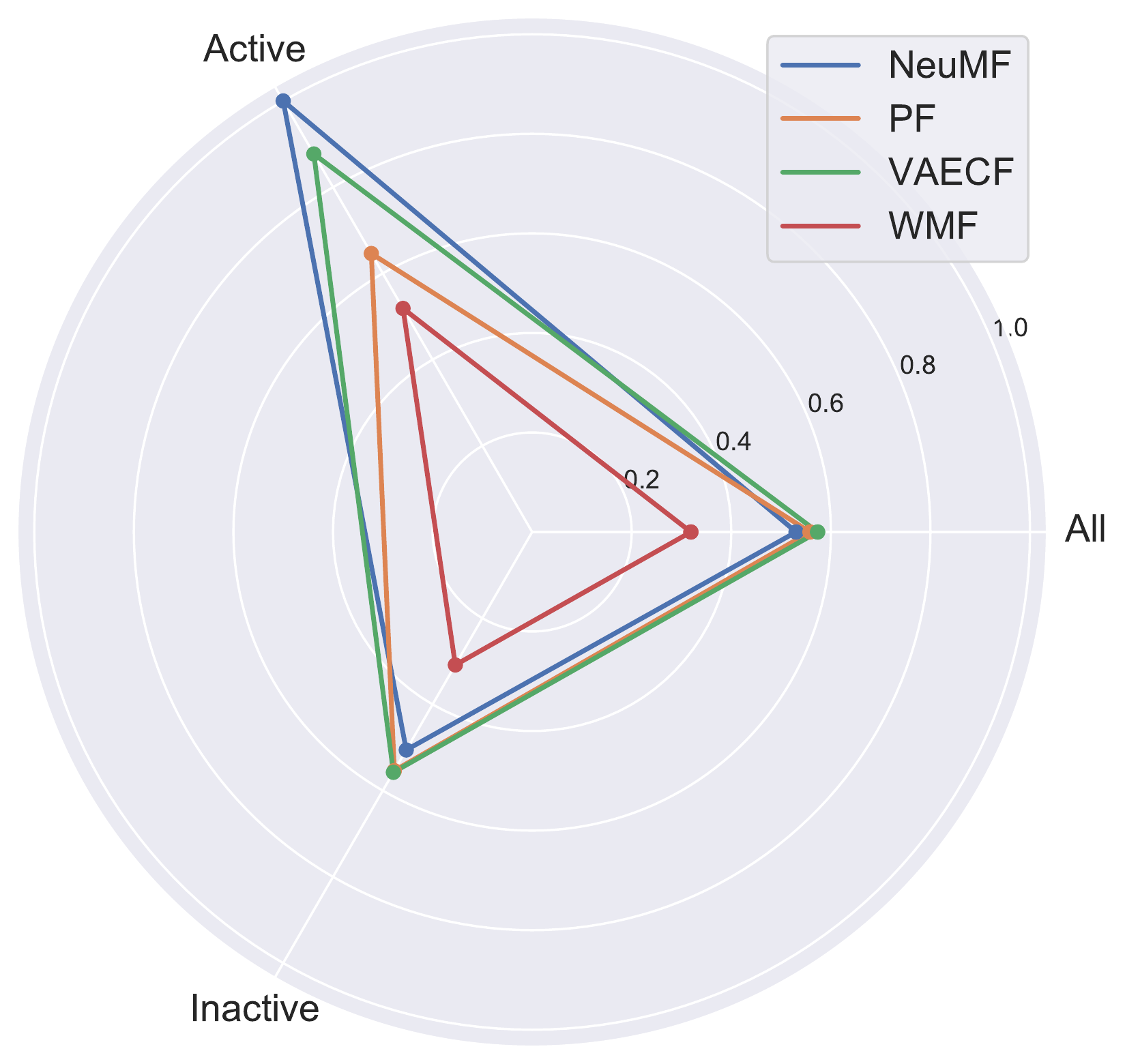}
    \label{fig:radar_user_gowalla}}
  \hfill
  \subfloat[Epinion]
    {\includegraphics[scale=0.24]{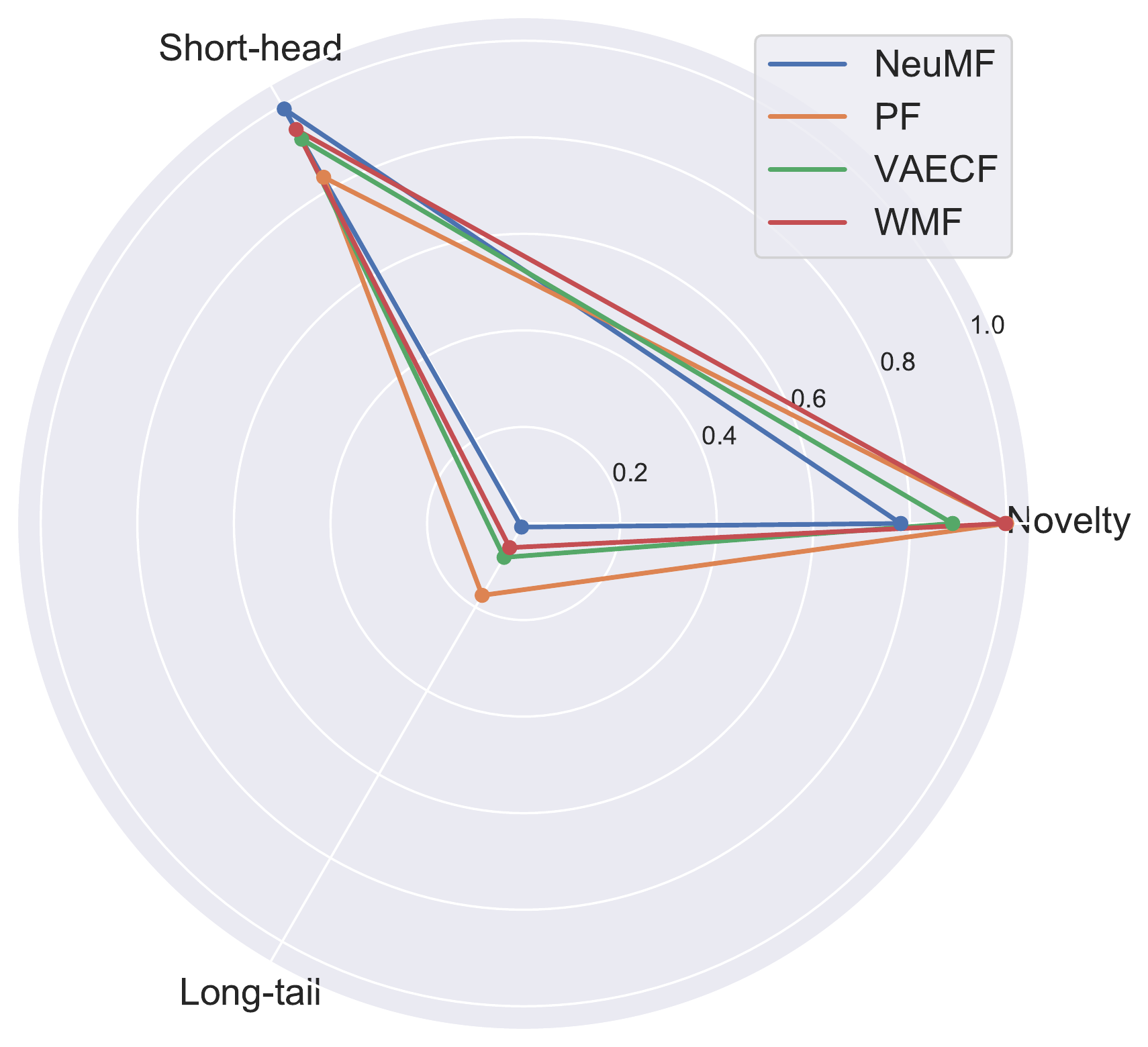}
    \label{fig:radar_item_epinion}}
  \hfill
  \subfloat[Gowalla]
    {\includegraphics[scale=0.24]{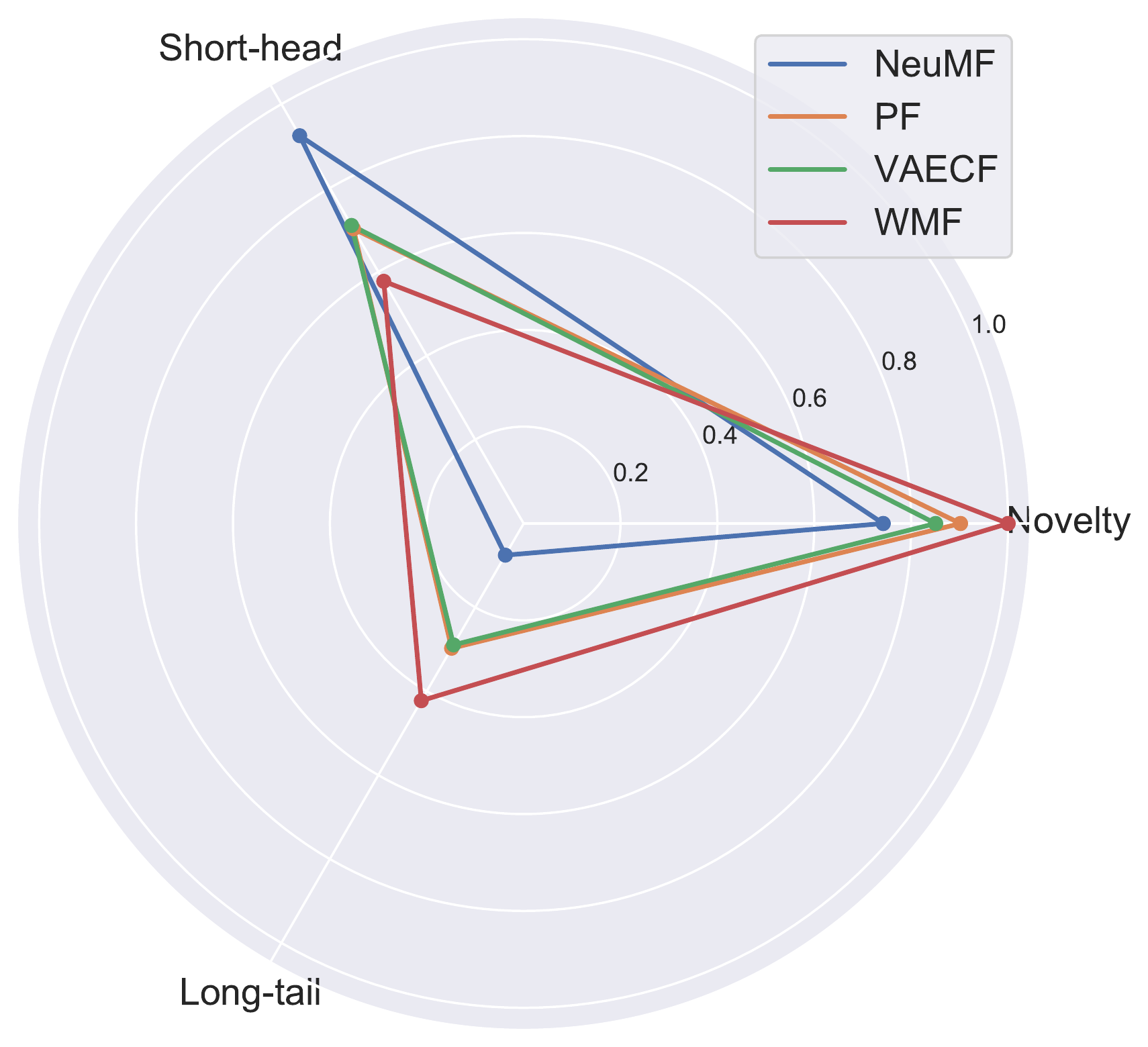}
    \label{fig:radar_item_gowalla}}
  \caption{Figures (a) and (b) show the difference between \usergroupA and \usergroupB user groups on normalized nDCG@10 and Figures (c) and (d) indicate the difference between \itemgroupA and \itemgroupB item groups on item exposure and novelty.}
\label{fig:unfairness_rec}
\end{figure*}

In the next step, we analyze the recommendation quality of the fairness-unaware baseline algorithms on the two datasets for both user and item groups. In Figure \ref{fig:unfairness_rec}, we show the min-max normalized values of the recommendation algorithms quality on the \datasetB and \datasetC datasets. In Figures \ref{fig:radar_user_epinion} and \ref{fig:radar_user_gowalla} we plot the accuracy (\ie nDCG@10) of all users, \usergroupA users, and \usergroupB users group. Later, in Section \ref{sec:results}, we present the details of the experiments results in Table \ref{tbl:results_epinion}.
We see that the \usergroupA group, representing a small proportion of the total users, receives a much higher recommendation quality than the \usergroupB group. Based on this finding, commercial recommendation engines have a tendency to disregard the majority of users. For example, from Figure \ref{fig:radar_user_epinion}, we see that NeuMF and VAECF almost achieve the best performance on \usergroupA user group ($1.0$) while for \usergroupB user group this number is almost $0.6$. Thus, this degrades the overall performance, which is only slightly above the inactive groups at about $0.6$ on both recommendation algorithms. Moreover, in Figures \ref{fig:radar_item_epinion} and \ref{fig:radar_item_gowalla}, we plot the number of \itemgroupA and \itemgroupB items that are recommended to the users as well as the novelty of the recommendation. As can be seen, although the number of items in \itemgroupA group is very small in comparison to \itemgroupB group, a large proportion of recommended items are short-head (\ie popular) items. Therefore, we can note that the unfairness of recommendation on the provider-side (measured in terms of exposure) is even higher than the preceding user-relevance fairness. Our results show that the fairness of recommendation is essential from both perspectives, \ie consumers and providers. Hence, it is important to propose techniques to better serve disadvantaged users and item groups.
\section{Proposed CP-fairness algorithm}
\label{sec:proposed_method}
In this section, we first define fairness in a multi-sided platform based on deviation from the parity from consumers' and producers' perspectives. Then we propose a mixed-integer linear programming re-ranking algorithm to generate a fair list of recommendations for each user given a traditional unfair recommendation list.

\partitle{Problem Formulation.}
Let $\mathcal{U}$ and $\mathcal{I}$ be the set of consumers and items with size $n$ and $m$, respectively. In a traditional recommender system application, a user, $u \in \mathcal{U}$, is provided a list of top-N items, $L_N(u)$ according to some relevance scores, $s \in S^{n\times N}$. The goal is to compute a new fair ordered list, $L^{F}_K(u)$ from the original unfair baseline recommended items $L_N(u)$ where $L^{F}_K(u) \subset L_N(u) \ \ where \ \ K \leq N$ such that the new list would be more appealing to the system as a whole considering the relevance score to the user $u$ and fairness objectives jointly.

Formally, item $I_x$ is presented ahead of item $I_y$ in $L_N(u)$ if $s_{ux} > s_{uy}$. Recommending the items to users based solely on the relevance score is designed to help improve the overall system recommendation accuracy regardless of the fairness measures in a multi-sided platform. We define a binary decision matrix $A=[A_{ui}]_{n \times N}$ with value 1 when the item ranked $i$ in $L_N(u)$ is recommended to user $u$ in the new fairness-aware recommendation list, i.e., $L^F_K(u)$ and 0 otherwise. Hence, the fair top-K recommendation list to user $u$, can be represented as $[A_{u1},A_{u2},...,A_{uN}]$ when $\sum_{i=1}^{N}{A_{ui}} = K $, $K\leq N$.
We discuss below an  optimization framework to calculate optimal values for matrix $A$ and re-rank original unfair recommendation list.

\subsection{Marketplace Fairness Objectives}
\label{subsec:market_obj}
The fairness representation target, \ie the ideal distribution of exposure or relevance among groups, is the first step to designing a fair recommendation model that should be decided based on the application and by the system designers. \citet{kirnap2021estimation} classified the fairness notions commonly used in IR literature into three groups based on their ideal target representation, (1) parity between groups; (2) proportionality in exposure in groups according to the group/catalog size; and (3) proportionality to the relevance size to the recommended groups. This work focuses on the notion of fairness defined based on a \dquotes{parity-oriented} target representation, leaving other concepts of fairness for future investigations.

\subsubsection{Fairness constraint on producer side} 
On the producer side, fairness may be quantified in terms of two distinct types of benefit functions: exposure and relevance, both of which can be binary or graded \cite{gomez2022provider}. Given that suppliers' primary objective is their products' visibility, this study focuses its attention exclusively on the (binary) exposure dimension, leaving the other producer fairness constraints for future work. To measure the exposure parity, we divide items into two categories of short-head $\mathcal{I}_1$ and long-tail $\mathcal{I}_2$ defined based on their popularity score where $\mathcal{I}_1 \cap \mathcal{I}_2 = \emptyset $. 
We define the notation of $\mathcal{MP}$ as a metric that quantifies item groups' exposition in the recommendation list decision, $A$, such as item group exposure count as used in our experiments. Deviation from Producer Fairness (DPF) is the metric that defines the deviation from parity exposition defined by:

% \begin{small}
\begin{equation}
    \mathbf{DPF}(A, \mathcal{I}_1, \mathcal{I}_2) = \E\left[\mathcal{MP}(i,A)| i \in \mathcal{I}_1 \right] - \E\left[\mathcal{MP}(i,A)|i \in \mathcal{I}_2\right] 
    \label{eq:dpf}
\end{equation}
% \end{small}

\subsubsection{Fairness constraint on the consumer side}
We divide users into two groups of frequent users, $\mathcal{U}_1$, and less-frequent users, $\mathcal{U}_2$ based on their activity level. The notation $\mathcal{MC}$ is a metric that can evaluate the recommendation relevance in each group given the the recommendation list decision matrix, A, such as group $nDCG$, $F_1$ score, or $Recall$ on train or validation dataset.\footnote{In this paper we used $nDCG$ as our fairness metric calculated on training dataset for re-ranking.} The ideal fair recommendation would recommend items based on user's known preference regardless of their activity level and the quality of recommendation on frequent and less-frequent users should not get amplified by the recommendation algorithm. We define the deviation from consumer fairness parity as:

\begin{small}
\begin{equation}
\mathbf{DCF}(A, \mathcal{U}_1, \mathcal{U}_2) = \E\left[\mathcal{MC}(u,A) | u \in \mathcal{U}_1 \right] -\E\left[\mathcal{MC}(u,A) | u \in \mathcal{U}_2 \right]
    \label{eq:dcf}
\end{equation}
\end{small}

Note that when the values of $DPF\rightarrow 0$ or $DCF\rightarrow 0$, it shows parity between the advantaged and disadvantaged user and item groups. However, this may suggest a reduction in the overall system accuracy. The sign of $DCF$ and $DPF$ shows the direction of disparity among the two groups, which we decided to keep in this work.% \yashar{say sthe like the previous one here.}

\subsection{Proposed Algorithm}
In this section, we will use the top-N recommendation list and relevance scores provided by traditional unfair recommendation algorithm as baseline. We apply a re-ranking post-hoc algorithm to maximize the total sum of relevance scores while minimizing the deviation from fairness on the producer and consumer perspective, $\mathbf{DCF}$ and $\mathbf{DPF}$ respectively. We can formulate the fairness aware optimization problem given the binary decision matrix, A, as:

\begin{small}
\begin{equation}
\begin{aligned}
% \begin{small}
\max_{A_{ui}} \quad & \sum_{u \in \mathcal{U}}\sum_{i=1}^{N}{S_{ui}.A_{ui}} - \lambda_1. \mathbf{DCF}(A,\mathcal{U}_1,\mathcal{U}_2) - \lambda_2. \mathbf{DPF}(A,\mathcal{I}_1,\mathcal{I}_2)\\
\textrm{s.t.} \quad & \sum_{i=1}^{N}{A_{ui}} = K, A_{ui} \in{\{0,1\}}
% \end{small}
\end{aligned}
\label{eq:optimization}
\end{equation}
\end{small}

The solution to above optimization problem would recommend exactly $K$ items to each user while maximizing the summation of preference scores and penalizing deviation from fairness. $ 0 \leq \lambda_1,\lambda_2 \leq 1$ are hyper parameters that would be selected based on the application and would assign how much weight is given to each fairness deviation concern. A complete user-oriented one-sided system could assign high weight to $\mathbf{DCF}$ and 0 to $\mathbf{DPF}$, and vice versa for a provider-oriented system. Note that if the value of $\lambda_1,\lambda_2$ are both equal to $0$ then the optimal solution will be the same as the baseline recommendation system list, that is, $L^{F}_K(u) = L_K(u)$. The effect of different values for $\lambda_1$ and $\lambda_2$ are studied in the ablation study section. Parity between groups is defined as the ideal representation of the fairness in the model but is not enforced as a constraint in our optimization to avoid resulting in diminishing the total model accuracy in a significant way. 

The optimization problem above is a mixed-integer linear programming (MILP) and a known NP-hard problem. Although MILP cannot be generally solved optimally in polynomial time, there exists many fast heuristic algorithms and industrial optimization solvers that would provide a good feasible solution in practice\footnote{\eg Gurobi solver at \url{https://www.gurobi.com}}. The MILP proposed optimization at Equation~\ref{eq:optimization} can be reduced to a special case of Knapsack problem where the goal is to select items for each user that would maximize the total score in which each item has a weight equal to 1 and the total weight is constrained by fixed size K. Since each item has the same weight in this instance of Knapsack problem, we can solve it optimally by a greedy algorithm in polynomial time. Now we are ready to propose the Fair Re-ranking Greedy algorithm (Algorithm~\ref{alg:LBP}) that would give the optimal solution for the optimization problem in Equation~\ref{eq:optimization}. Alternatively, we can relax the binary constraint of variables to fractional variables with boundary constraints \ie $0\leq A_{ui}\leq1$ and solve the relaxed optimization problem in polynomial time. The algorithm based on optimization is presented in Algorithm~\ref{alg:optmization}.

\begin{algorithm}
	\DontPrintSemicolon
	\SetAlgoLined
	\SetKwInOut{Input}{Input}\SetKwInOut{Output}{Output}
	\Input{\textit{$\mathcal{U},\mathcal{I}, \lambda_1,\lambda_2, K$}}
	\Output{Item recommendation matrix  {${A}^*$}}
	\tcc{${A}^*$ is a $n\times N$  binary matrix with elements $1$ when the item is recommended to the user and $0$ otherwise .}
	S $\leftarrow$ Run baseline algorithm and store the relevance score matrix for top-N items for each user \;
	\tcc{$S$ is a $n\times N$  matrix representing the predicted relevance score generated by unfair baseline algorithms for each user-item pair}
	Formulate the Optimization Problem as an Integer Programming following equation \ref{eq:optimization} \;
	Solve Relaxed Linear Programming with added boundary constraint $0\leq A_{ui}\leq1$, $\forall u\in \mathcal{U} , \forall i \in \{1,2,...,N \}$ \;
	\Return{$A^*$}
	\caption{The Fair Re-ranking Optimization}
	\label{alg:optmization}
\end{algorithm}

\begin{algorithm}
	\DontPrintSemicolon
	\SetAlgoLined
	\SetKwInOut{Input}{Input}\SetKwInOut{Output}{Output}
	\Input{\textit{$\mathcal{U},\mathcal{I}, \lambda_1,\lambda_2, K$}}
	\Output{Fair top-K recommendation list  {$L^{F}_K$}}
	
	S $\leftarrow$ Run baseline algorithm and store the relevance score matrix for top-N items for each user  \;
	$UG_{u}$ $\leftarrow$ Binary vector with elements of $1$ when user $u \in \mathcal{U}$ is in the protected group and $-1$ otherwise \;
	$PG_{i}$ $\leftarrow$ Binary vector with elements of $1$ when item ranked $i$ in the traditional top-N recommendation list, ${L}_N(u)$, is in the protected group and $-1$ otherwise \;

	\tcc{Protected groups are the underrepresented groups in baseline algorithms such as unpopular items }
	    
	\ForEach{$u\in \mathcal{U}$}{
        \ForEach{$i\in \{1,2,...,N\}$} {
            $CF_{ui} \leftarrow UG_{u} \times \mathcal{MC}(u,i)$ \;
            $PF_{ui} \leftarrow PG_{i} \times \mathcal{MP}(u,i)$ \;
		    $\hat{S}_{ui}\leftarrow S_{ui} + \lambda_1 CF_{ui} + \lambda_2 PF_{ui} $
		    
		    }
		}
   $L^{F}_K(u)$ $\leftarrow$ Sort matrix $\hat{S}$ along the rows axis and select top-$K$ items for recommendation to each users, $\forall$ $u\in \mathcal{U}$ \;
	\Return{$L^{F}_K(u)$}
	\caption{The Fair Re-ranking Greedy Algorithm}
	\label{alg:LBP}
\end{algorithm}

We finally discuss the time complexity of the Fair Re-ranking Greedy algorithm.

\begin{lemma}
The time complexity of proposed re-ranking algorithm has a worst case bound of $\mathcal{O}(n\times N)$
\end{lemma}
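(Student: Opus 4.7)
The plan is to account for the cost of each step of Algorithm~\ref{alg:LBP} separately and then combine them. The first three lines perform $\mathcal{O}(1)$ bookkeeping (the relevance matrix $S$ is available as the output of the baseline recommender) together with the construction of the two indicator vectors $UG \in \{-1,+1\}^{n}$ and $PG \in \{-1,+1\}^{N}$, which costs $\mathcal{O}(n+N)$. The group-membership tests used here can be implemented in $\mathcal{O}(1)$ after a single linear pass over the training interactions that tallies user activities and item popularities; these precomputations are amortized outside the per-query cost.

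First I would bound the main nested loop. The outer loop iterates over every $u\in\mathcal{U}$ and the inner loop over the $N$ candidate positions in $L_N(u)$. Each iteration evaluates $CF_{ui}$, $PF_{ui}$, and $\hat{S}_{ui}$ via a constant number of lookups into $UG$, $PG$, $\mathcal{MC}$, $\mathcal{MP}$, and $S$, together with a handful of multiplications and additions, all in $\mathcal{O}(1)$. Hence this block contributes exactly $\mathcal{O}(n\cdot N)$ time.

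The delicate step, and the one I expect to be the main obstacle, is the final row-wise selection of the top-$K$ items from the adjusted score matrix $\hat{S}$. Naively sorting each row by a comparison-based sort costs $\Theta(nN\log N)$, which exceeds the claimed bound. To recover $\mathcal{O}(nN)$, I would replace the full sort with a linear-time selection routine: for each of the $n$ rows apply a deterministic $\mathcal{O}(N)$ selection (median-of-medians) to find the $K$-th largest adjusted score, then partition around it in another $\mathcal{O}(N)$ pass to extract the top-$K$ block. Aggregated over users, this step adds $\mathcal{O}(nN)$. Alternatively, since $K$ is typically a small constant in recommender applications, scanning each row while maintaining a size-$K$ min-heap gives $\mathcal{O}(N\log K)=\mathcal{O}(N)$ per user, which suffices.

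Summing the three contributions yields $\mathcal{O}(n+N)+\mathcal{O}(nN)+\mathcal{O}(nN)=\mathcal{O}(nN)$, which establishes the claimed worst-case bound. The one subtle point I would flag in the write-up is the implicit assumption that each evaluation of the fairness terms $\mathcal{MC}(u,i)$ and $\mathcal{MP}(u,i)$ is $\mathcal{O}(1)$; if either metric required rescanning the partial recommendation list (as some DCG-style fairness measures might), the per-iteration cost would grow and the bound would need to be weakened accordingly.
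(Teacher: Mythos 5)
Your proof is correct and follows essentially the same decomposition as the paper's own argument: constant/linear setup, an $\mathcal{O}(n\times N)$ re-scoring loop that dominates, and a per-user top-$K$ selection handled by a heap (or linear-time selection) rather than a full comparison sort. Your explicit observation that a naive row sort would cost $\Theta(nN\log N)$, and your caveat that $\mathcal{MC}$ and $\mathcal{MP}$ must be $\mathcal{O}(1)$ per evaluation, are if anything more careful than the paper's version of the same argument.
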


\begin{proof}
The first 3 lines of the algorithm will store the traditional relevance score matrix for Top-N items and store the protected groups in consumers and providers which can be done in linear time $\mathcal{O}(n + m)$. The second part in line 4-10 is to re-score all elements in the original relevance matrix by accounting for fairness measures which needs $\mathcal{O}(n \times N)$ operations in constant time. Lastly, we will sort the new relevance matrix for all users and select top-K items for recommendation. Using max-heap the worst case time complexity is $\mathcal{O}(N + K \log(K))$. Hence, considering $N$ and $K$ are much smaller than $n$, the total time complexity is $\mathcal{O}(n \times N)$ 
\end{proof}
\section{Experimental Methodology}
\label{sec:experimental_methodology}
In this section, we first briefly describe the datasets, baselines and experimental setup used for experiments. All source code and dataset of this project has been released publicly. To foster the reproducibility of our study, we evaluate all recommendations with the Python-based open-source recommendation toolkit Cornac\footnote{\url{https://cornac.preferred.ai/}} \cite{salah2020cornac,truong2021exploring}, and we have made our codes open source.\footnote{\url{https://rahmanidashti.github.io/CPFairRecSys/}}

\partitle{Datasets.}
We use eight public datasets from different domains such as Movie, Point-of-Interest, Music, Book, E-commerce, and Opinion with different types of feedback (\ie explicit or implicit) and characteristics, including \datasetA, \datasetB, \datasetC, and \datasetD. We apply $k$-core pre-processing (\ie each training user/item has at least $k$ ratings) on the datasets to make sure each user/item has sufficient feedback. Table \ref{tbl:datasets} shows the statistics of the all datasets.

\begin{table}
  \caption{Statistics of the datasets}
  \centering
  \label{tbl:datasets}
  \begin{tabular}{lcccc}
    \toprule
    \textbf{Dataset} & Users & Items & Interactions & Sparsity \\
    \midrule
    \textbf{\datasetA}  & 943 & 1,349 & 99,287 & 92.19\% \\
    \textbf{\datasetB}  & 2,677 & 2,060 & 103,567 & 98.12\% \\
    \textbf{\datasetC}  & 1,130 & 1,189 & 66,245 & 95.06\% \\
    \textbf{\datasetD}  & 1,797 & 1,507 & 62,376 & 97.69\% \\
    \textbf{BookCrossing}  & 1,136 & 1,019 & 20,522 & 98.22 \% \\
    \textbf{Foursquare}  & 1,568 & 1,461 & 42,678 & 98.13\% \\
    \textbf{AmazonToy}  & 2,170 & 1,733 & 32,852 & 99.12\% \\
    \textbf{AmazonOffice}  & 2,448 & 1,596 & 36,841 & 99.05\% \\
    \bottomrule 
  \end{tabular}
\end{table}

\partitle{Evaluation Method.}
For the experiments, we randomly split each dataset into train, validation, and test sets in 70\%, 10\%, and 20\% respectively, and all the baseline models share these datasets for training and evaluation. For the re-ranking experiment, we categorize both users and items into two groups \cite{li2021user,abdollahpouri2019unfairness}. For users, we select the top 5\% users based on the number of interaction/level of activity from the training set as the active group and the rest as the inactive group. For items, we select top 20\% of items according to the number of interactions they received from the training set as the popular items, \ie short-head items, and the rest as the unpopular items or long-tail items. In addition, we define Consumer-Producer fairness evaluation (mCPF) metric to evaluate the overall performance of models \wrt two-sided fairness. mCPF computes the weighted average deviation of provider fairness (DPF) and deviation of consumer fairness (DCF), \ie $mCPF(w) = w \cdot DPF + (1-w) \cdot DCF$, where $w$ is a weighting parameter that can be selected by the system designer depending on the application or priority for each fairness aspect of the \textit{marketplace}. DCF and DPF are defined in Equations \ref{eq:dpf} and \ref{eq:dcf}, respectively, and the lowest value for mCPF shows the fairest model. To select the model parameters, $\lambda_1$ and $\lambda_2$, we treat them as a hyper parameter and set them to the value that maximize the average values of mCPF and overall nDCG for each dataset and model.

\partitle{Baselines.}
We compare the performance of our fairness method on several recommendation approaches, from traditional to deep recommendation models, as suggested by \citet{dacrema2019we}. Therefore, we include two traditional methods (PF and WMF) as well as two deep recommendation models (NeuMF and VAECF). We also include two baselines approaches, Random and MostPop, for further investigation of the results. The introduction of baselines are as the following:  

\begin{itemize}
    \item \textbf{PF} \cite{gopalan2015scalable}: This method is a variant of probabilistic matrix factorization where the weights of each user and item latent features are positive and modeled using the Poisson distribution.
    \item \textbf{WMF} \cite{hu2008collaborative,pan2008one}: This method assigns smaller wights to negative samples and assumes that for two items their latent features are independent.
    \item \textbf{NeuMF} \cite{he2017neural}: This algorithm learns user and item features using multi-layer perceptron (MLP) on one side and matrix factorization (MF) from another side, then applies non-linear activation functions to train the mapping between users and items features that are concatenated from MLP and MF layers.
    \item \textbf{VAECF} \cite{liang2018variational}: This method is based on variational autoencoders which introduces a generative model with multinomial likelihood and use Bayesian inference for parameter estimation.
\end{itemize}

\noindent We compare our proposed fair CP re-ranking algorithm with unilateral fairness models, C and P-fairness re-ranking on top of the baselines to show how our method can achieve the desirable performance on fairness metrics in both stakeholders and overall recommendation performance (accuracy and beyond-accuracy).

\partitle{Evaluation Settings.}
We adopt the baseline algorithms with the default parameter settings, suggested in their original paper. We set the embedding size for users and items to $50$ for all baseline algorithms. For NeuMF, we set the size of the MLP with $32, 16, 8$ and we apply hyperbolic tangent (TanH) non-linear activation function between layers. We set the learning rate to $0.001$. Early stopping is applied and the best models are selected based on the performance on the validation set. We apply Adam \cite{kingma2014adam} as the optimization algorithm to update the model parameters.
\section{Results}
\label{sec:results}

\begin{table*}
\centering
\caption{The recommendation performance of all, and different user and item groups of our re-ranking method and corresponding baselines on \datasetB and \datasetC datasets. All re-ranking results here are obtained under the fairness constraint on nDCG. The evaluation metrics here are calculated based on the top-10 predictions in the test set. Our best results are highlighted in bold. $\Delta$ values denote the percentage of relative improvement in $mCPF$ compared to fairness-unaware recommendation algorithm (N). The reported results for $mCPF$ are based on $w=0.5$.}
\label{tbl:results_epinion}
\begin{tabular}{llllllllllllllll}
\toprule
\multirow{2}{*}{Model} & \multirow{2}{*}{Type} & \multicolumn{4}{c}{User Relevance (nDCG)} && \multicolumn{5}{c}{Item Exposure} && \multicolumn{3}{c}{Both} \\
\cmidrule{3-6}\cmidrule{8-12}\cmidrule{14-16}
                        &                       & All & Active & Inactive & DCF $\downarrow$ && Nov.  & Cov.  & Short. & Long. & DPF $\downarrow$ && mCPF $\downarrow$ & $\frac{mCPF}{All}$ $\downarrow$ & $\Delta(\%)$\\
\midrule
\multicolumn{16}{c}{\textbf{\datasetB}} \\ \hline
%Random & - & 0.0035 & 0.0149 & 0.0029 & 0.6759 && 6.7339 & 0.77 & 0.1996 & 0.8004 & -0.6008 && 0.0807 & 23.05 & 0.0 \\
% MostPop & - & 0.0302 & 0.0566 & 0.0288 & 0.3258 && 3.7133 & 0.0049 & 1.0 & 0.0 & 1.0 && 0.7209 & 23.87 & 0.0 \\ \hline
PF & N & 0.0321 & 0.0751 & 0.0298 & 0.4321 && 4.9602 & 0.5073 & 0.8281 & 0.1719 & 0.6562 && 0.621 & 19.3458 & 0.0 \\ 
PF & C & \textbf{0.0337} & 0.0732 & \textbf{0.0316} & \textbf{0.3964} && 4.9754 & 0.5083 & 0.8243 & 0.1757 & 0.6486 && 0.593 & 17.5964 & 4.50 \\ 
PF & P & 0.0304 & \textbf{0.0757} & 0.028 & 0.4598 && 5.3172 & 0.5325 & 0.6153 & 0.3847 & 0.2306 && 0.427 & 14.0461 & 31.23 \\ 
PF & CP & 0.0313 & 0.074 & 0.029 & 0.4368 && \textbf{5.3302} & \textbf{0.5345} & \textbf{0.6113} & \textbf{0.3887} & \textbf{0.2226} && \textbf{0.4074} & \textbf{13.0160} & \textbf{34.39} \\ \hline
WMF & N & 0.0235 & 0.0577 & 0.0217 & 0.4534 && 4.9483 & 0.2913 & 0.9423 & 0.0577 & 0.8846 && 0.7496 & 31.8979 & 0.0 \\ 
WMF & C & \textbf{0.0236} & 0.0565 & \textbf{0.0219} & \textbf{0.442} && 4.9481 & 0.2898 & 0.9423 & 0.0577 & 0.8846 && 0.7419 & 31.4364 & 1.02 \\ 
WMF & P & 0.0234 & \textbf{0.0577} & 0.0215 & 0.456 && \textbf{4.9626} & \textbf{0.3019} & 0.9322 & 0.0678 & 0.8644 && 0.7413 & 31.6795 & 1.10 \\ 
WMF & CP & 0.0234 & 0.0565 & 0.0217 & 0.445 && 4.9623 & 0.301 & \textbf{0.9322} & \textbf{0.0678} & \textbf{0.8644} && \textbf{0.7339} & \textbf{31.3632} & \textbf{2.09} \\ \hline
NeuMF & N & 0.0447 & 0.084 & 0.0427 & 0.3264 && 3.8733 & 0.1223 & 0.9914 & 0.0086 & 0.9828 && 0.7127 & 15.9441 & 0.0 \\ 
NeuMF & C & \textbf{0.0456} & 0.077 & \textbf{0.044} & 0.2726 && 3.8745 & 0.1223 & 0.9913 & 0.0087 & 0.9826 && 0.6761 & 14.8268 & 5.13 \\ 
NeuMF & P & 0.0448 & \textbf{0.0846} & 0.0427 & 0.3297 && 3.8801 & 0.1262 & \textbf{0.9849} & \textbf{0.0151} & \textbf{0.9698} && 0.7084 & 15.8125 & 0.60 \\ 
NeuMF & CP & 0.0455 & 0.0749 & 0.0439 & \textbf{0.2606} && \textbf{3.8816} & \textbf{0.1262} & 0.9851 & 0.0149 & 0.9702 && \textbf{0.6618} & \textbf{14.5451} & \textbf{7.14} \\ \hline
VAECF & N & \textbf{0.0444} & \textbf{0.0801} & 0.0425 & 0.3067 && 4.4061 & 0.3107 & 0.919 & 0.081 & 0.838 && 0.6269 & 14.1194 & 0.0 \\
VAECF & C & 0.0443 & 0.0735 & \textbf{0.0428} & \textbf{0.2644} && 4.408 & 0.3102 & 0.9184 & 0.0816 & 0.8368 && 0.5976 & 13.4898 & 4.67 \\
VAECF & P & 0.0438 & 0.0789 & 0.042 & 0.3054 && 4.4349 & \textbf{0.3218} & 0.8946 & 0.1054 & 0.7892 && 0.6016 & 13.7352 & 4.03 \\
VAECF & CP & 0.0439 & 0.0739 & 0.0423 & 0.2716 && \textbf{4.4366} & 0.3204 & \textbf{0.8943} & \textbf{0.1057} & \textbf{0.7886} && \textbf{0.5784} & \textbf{13.1754} &\textbf{7.73} \\
% \bottomrule
\midrule
\multicolumn{16}{c}{\textbf{\datasetC}} \\ \hline
%Random & - & 0.0078 & 0.0164 & 0.0073 & 0.3833 && 4.7934 & 1.05 & 0.1987 & 0.8013 & -0.6026 && 0.0445 & 5.11 & 0.0 \\
% MostPop & - & 0.0455 & 0.0997 & 0.0427 & 0.4007 && 2.9387 & 0.0084 & 1.0 & 0.0 & 1.0 && 0.7821 & 17.18 & 0.0 \\ \hline
PF & N & 0.0592 & 0.0685 & 0.0587 & 0.0771 && 4.0587 & 0.6098 & 0.7026 & 0.2974 & 0.4052 && 0.2569 & 4.3395 & 0.0 \\ 
PF & C & \textbf{0.0662} & 0.0617 & \textbf{0.0664} & 0.0367 && 4.0701 & 0.614 & 0.7011 & 0.2989 & 0.4022 && 0.2269 & 3.4275 & 11.67 \\ 
PF & P & 0.057 & \textbf{0.0688} & 0.0564 & 0.0993 && 4.3348 & 0.6468 & 0.4954 & 0.5046 & -0.0092 && 0.0653 & 1.1456 & 74.58 \\ 
PF & CP & 0.0618 & 0.0585 & 0.062 & \textbf{0.0288} && \textbf{4.3559} & \textbf{0.651} & \textbf{0.488} & \textbf{0.512} & \textbf{-0.024} && \textbf{0.0083} & \textbf{0.1343} & \textbf{96.76} \\ \hline
WMF & N & 0.0338 & \textbf{0.055} & 0.0327 & 0.2541 && 4.5012 & 0.4617 & 0.5775 & 0.4225 & 0.155 && 0.2564 & 7.5858 & 0.0 \\ 
WMF & C & \textbf{0.0357} & 0.0527 & \textbf{0.0348} & 0.2044 && 4.5152 & 0.4567 & 0.5748 & 0.4252 & 0.1496 && 0.2187 & 6.1261 & 14.70 \\ 
WMF & P & 0.0318 & 0.0508 & 0.0308 & 0.2454 && 4.649 & \textbf{0.4693} & 0.4757 & 0.5243 & -0.0486 && 0.1485 & 4.6698 & 42.08 \\ 
WMF & CP & 0.0334 & 0.047 & 0.0327 & \textbf{0.1793} && \textbf{4.6657} & 0.4626 & \textbf{0.4705} & \textbf{0.5295} & \textbf{-0.059} && \textbf{0.0967} & \textbf{2.8952} & \textbf{62.28} \\ \hline
NeuMF & N & 0.0563 & \textbf{0.1061} & 0.0537 & 0.3283 && 3.3409 & 0.1463 & 0.9245 & 0.0755 & 0.849 && 0.6556 & 11.6448 & 0.0 \\ 
NeuMF & C & \textbf{0.0585} & 0.0897 & \textbf{0.0568} & 0.2244 && 3.3427 & 0.1447 & 0.9242 & 0.0758 & 0.8484 && 0.5822 & 9.9521 & 11.19 \\ 
NeuMF & P & 0.0552 & 0.1049 & 0.0526 & 0.332 && \textbf{3.3827} & \textbf{0.1531} & 0.8868 & 0.1132 & 0.7736 && 0.6205 & 11.2409 & 5.35 \\ 
NeuMF & CP & 0.0583 & 0.0889 & 0.0566 & \textbf{0.2214} && 3.3823 & 0.1522 & \textbf{0.8868} & \textbf{0.1132} & \textbf{0.7736} && \textbf{0.5427} & \textbf{9.3087} & \textbf{17.22} \\ \hline
VAECF & N & 0.0608 & 0.093 & 0.0592 & 0.2227 && 3.8277 & 0.4104 & 0.7106 & 0.2894 & 0.4212 && 0.3674 & 6.0428 & 0.0 \\
VAECF & C & \textbf{0.0623} & 0.0916 & \textbf{0.0607} & 0.2028 && 3.8289 & 0.4113 & 0.7104 & 0.2896 & 0.4208 && 0.3532 & 5.6693 & 3.86 \\
VAECF & P & 0.06 & \textbf{0.0935} & 0.0582 & 0.2324 && 3.8492 & 0.4146 & 0.6924 & 0.3076 & 0.3848 && 0.356 & 5.9333 & 3.10 \\
VAECF & CP & 0.0619 & 0.0889 & 0.0605 & \textbf{0.19} && \textbf{3.8501} & \textbf{0.4146} & \textbf{0.6919} & \textbf{0.3081} & \textbf{0.3838} && \textbf{0.3257} & \textbf{5.2617} & \textbf{11.35} \\
\bottomrule
\end{tabular}
\end{table*}

\begin{figure*}
  \centering
  \subfloat[MovieLens]
    {\includegraphics[scale=0.27]{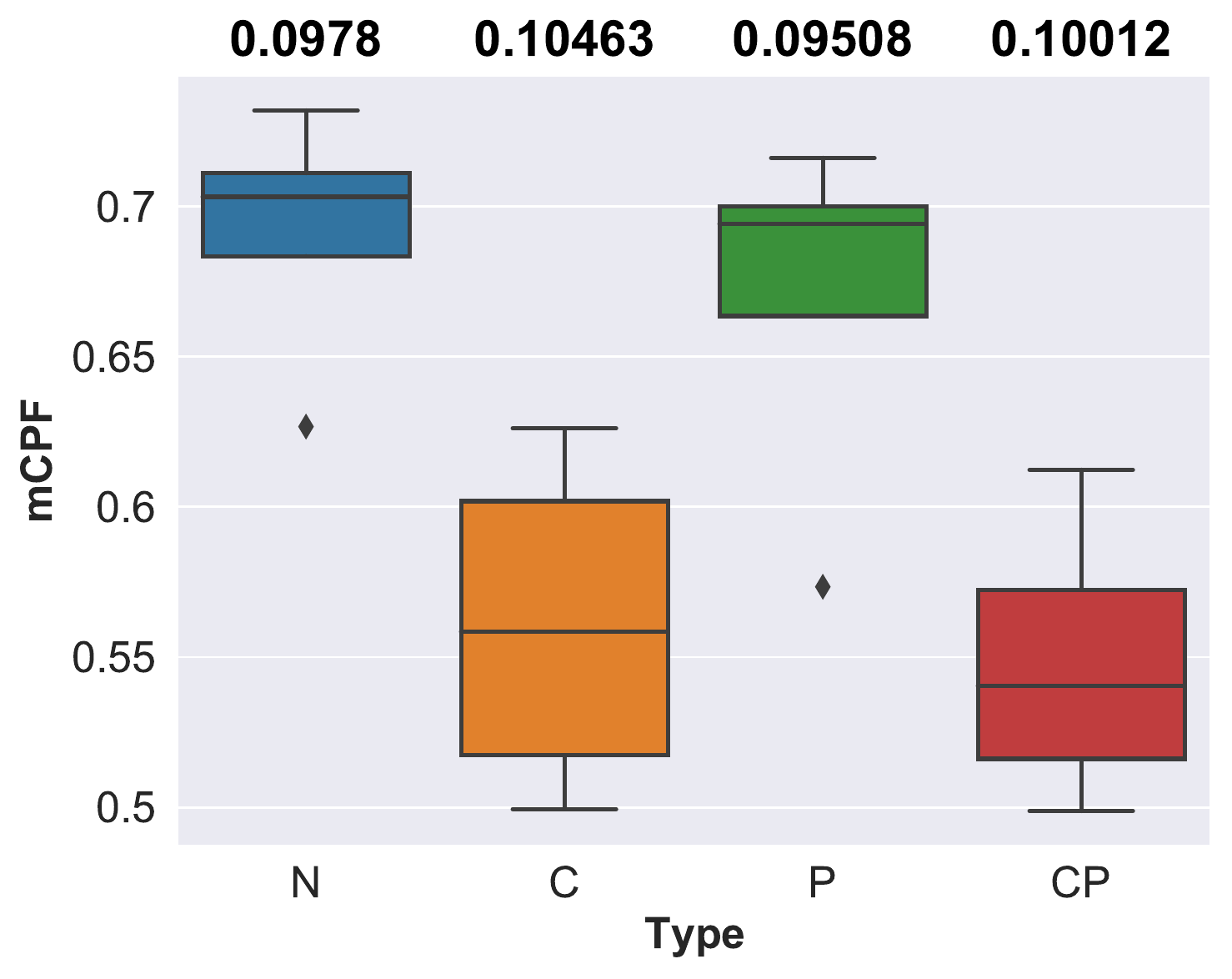}
    \label{fig:CPevalBoxPlot_MovieLens}}
  \hfill
  \subfloat[Epinion]
    {\includegraphics[scale=0.27]{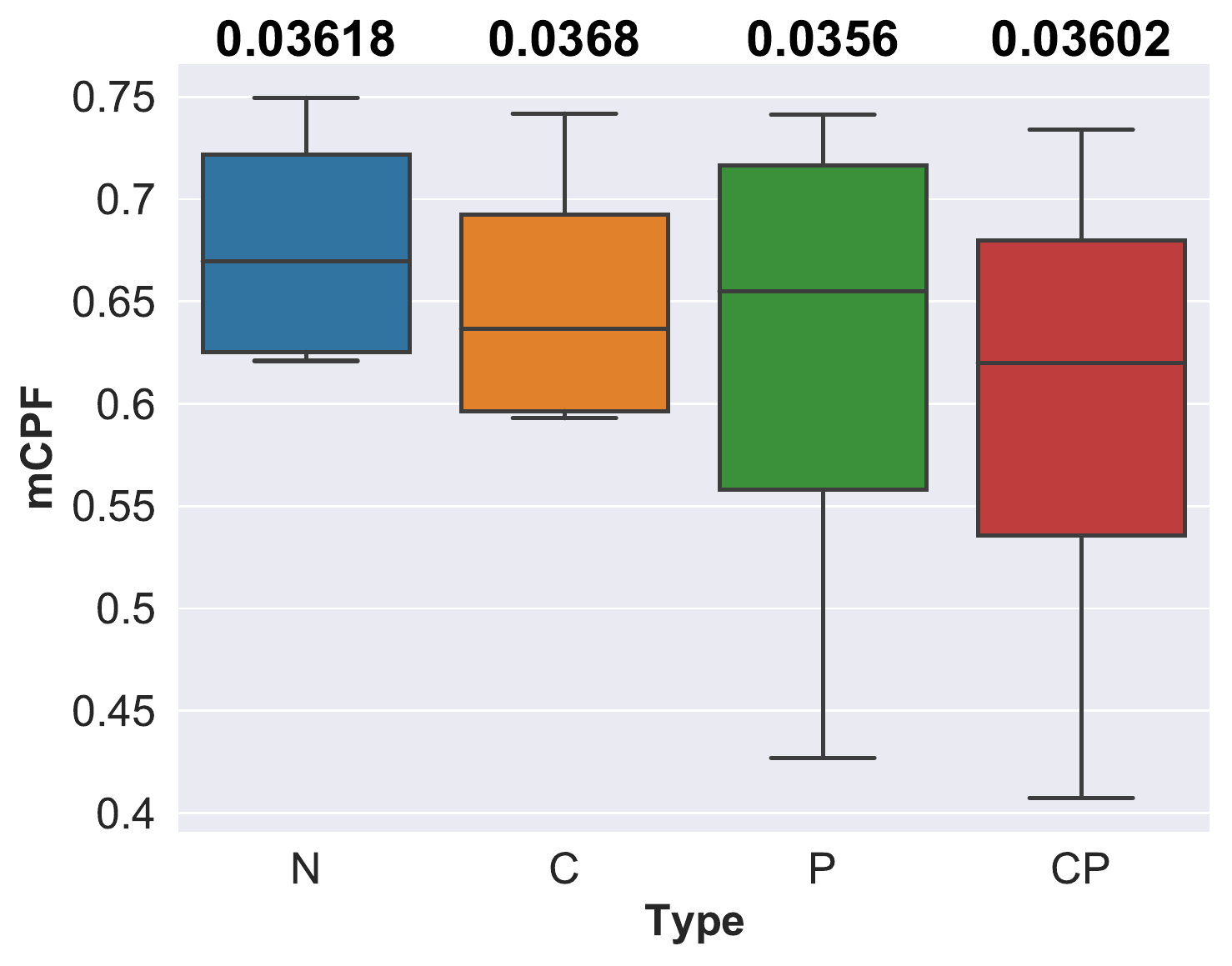}
    \label{fig:CPevalBoxPlot_Epinion}}
  \hfill
  \subfloat[AmazonToy]
    {\includegraphics[scale=0.27]{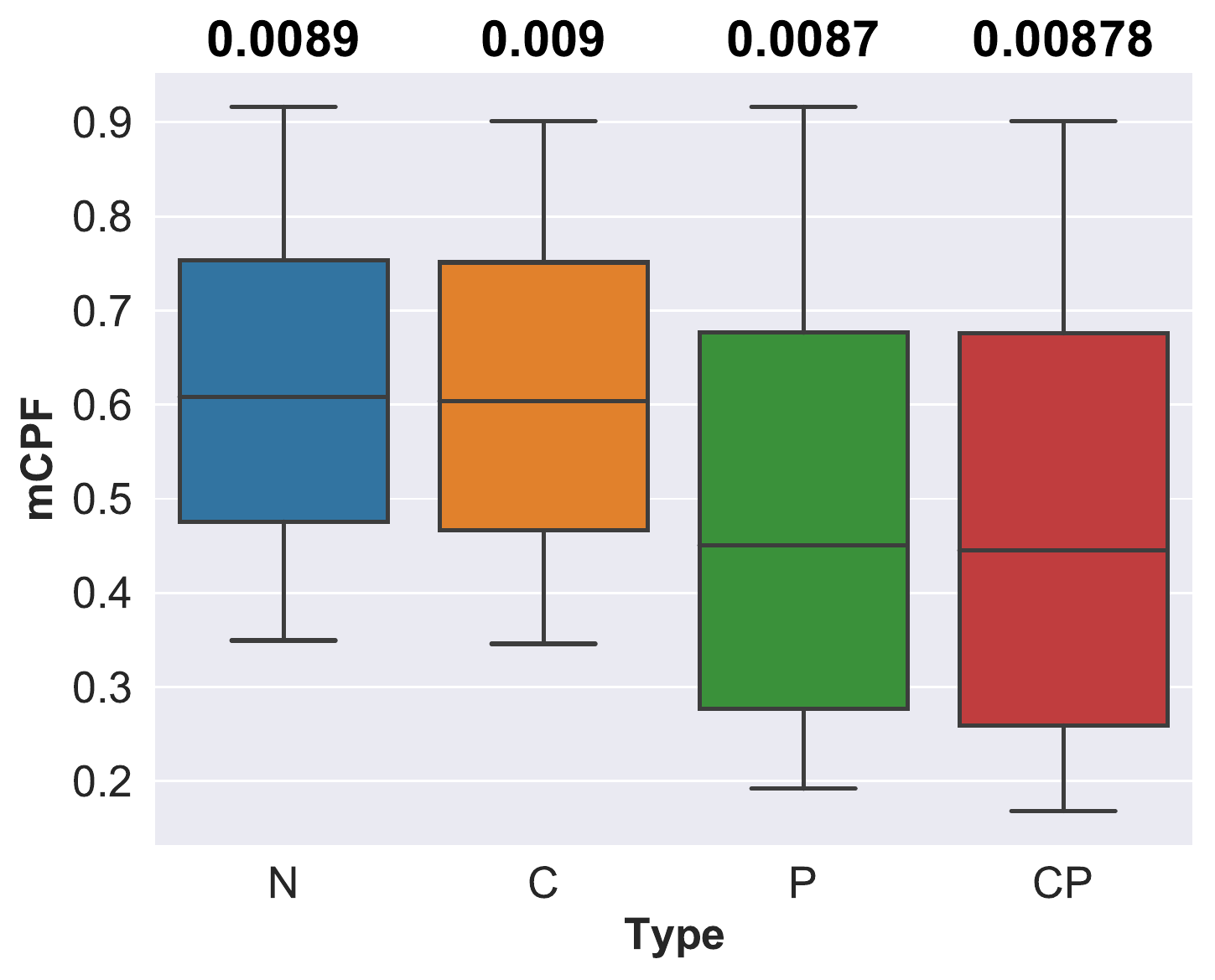}
    \label{fig:CPevalBoxPlot_AmazonToy}}
    \hfill
  \subfloat[AmazonOffice]
    {\includegraphics[scale=0.27]{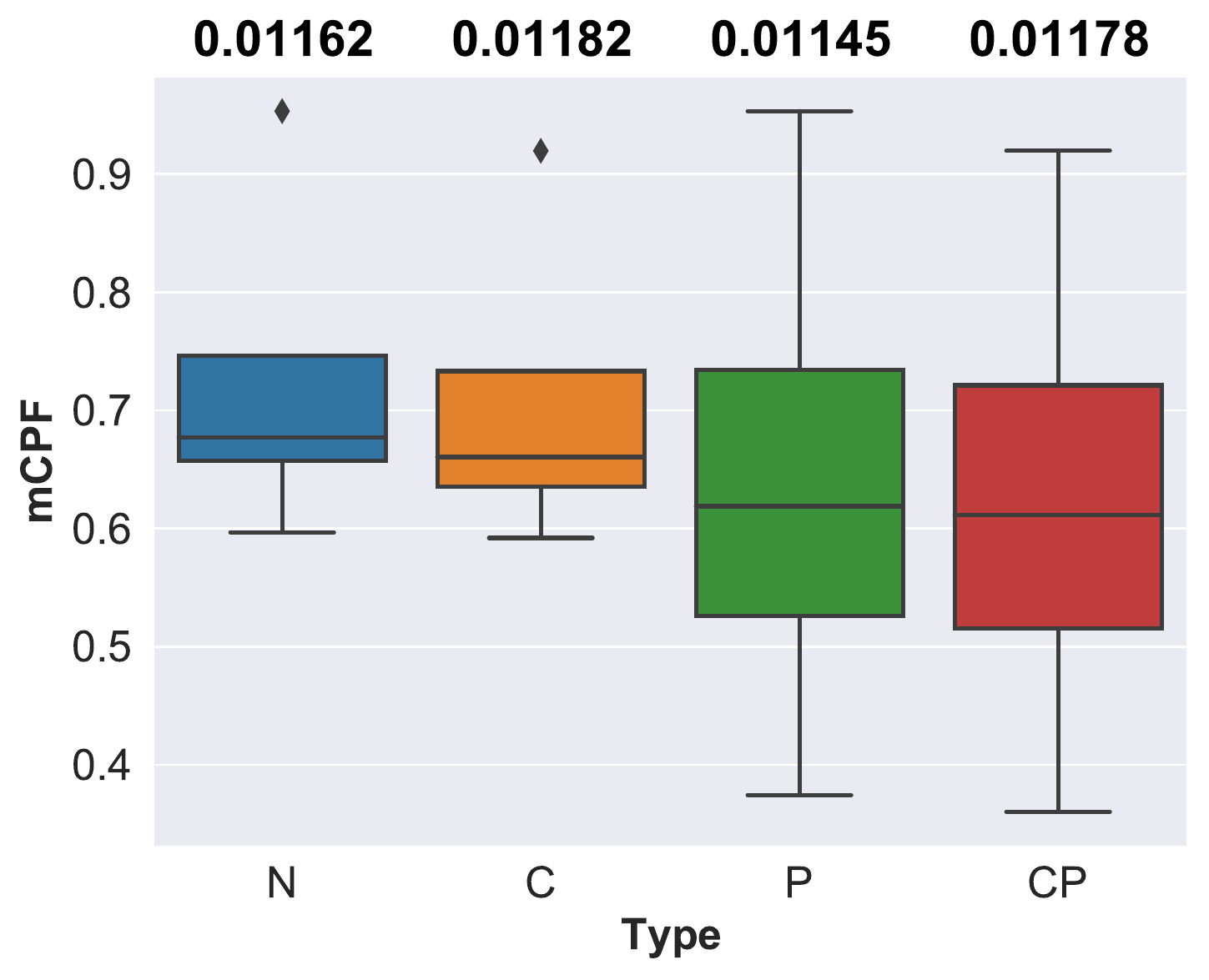}
    \label{fig:CPevalBoxPlot_AmazonOffice}}
   \hfill
  \subfloat[BookCrossing]
    {\includegraphics[scale=0.27]{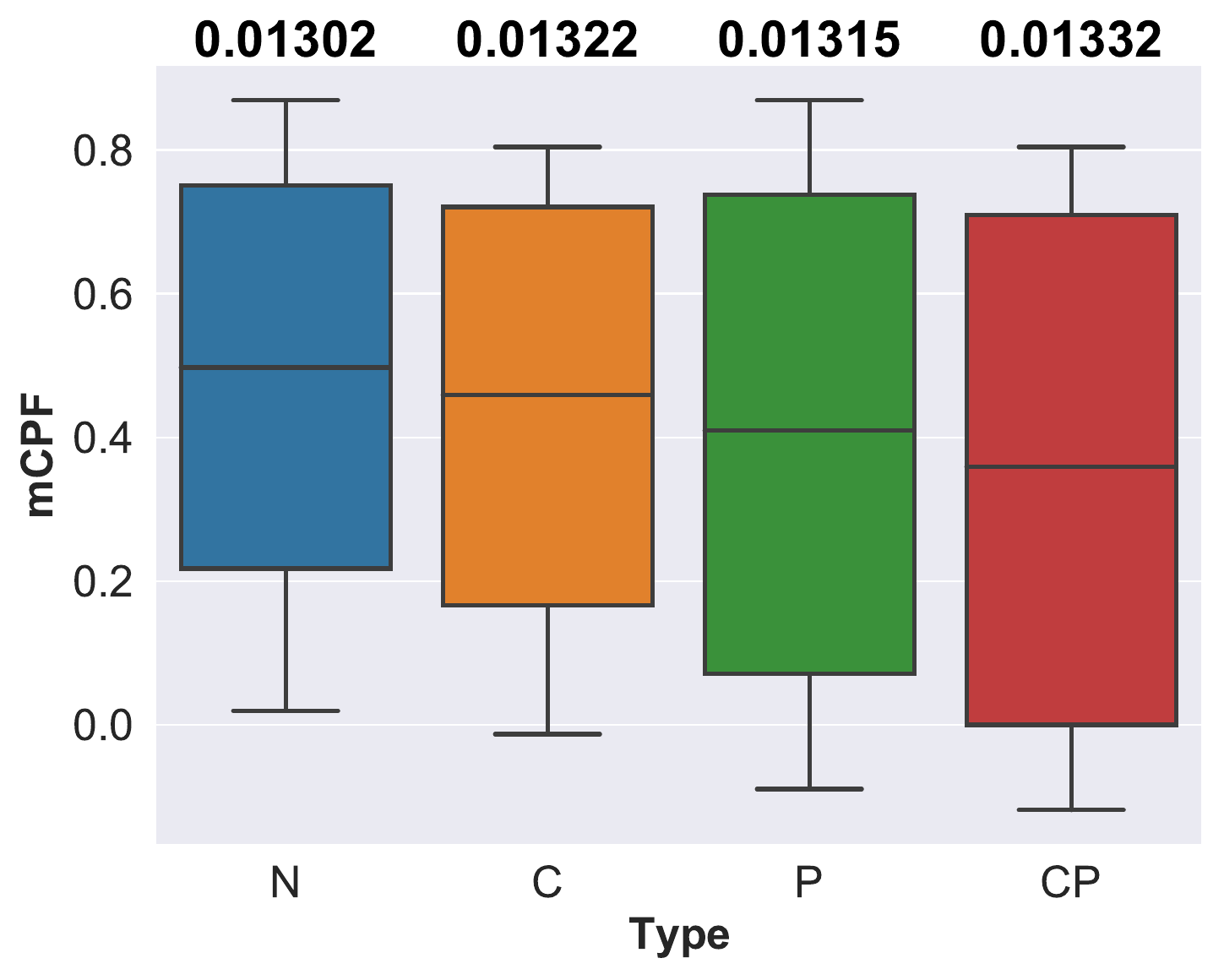}
    \label{fig:CPevalBoxPlot_BookCrossing}}
  \hfill
  \subfloat[Gowalla]
    {\includegraphics[scale=0.27]{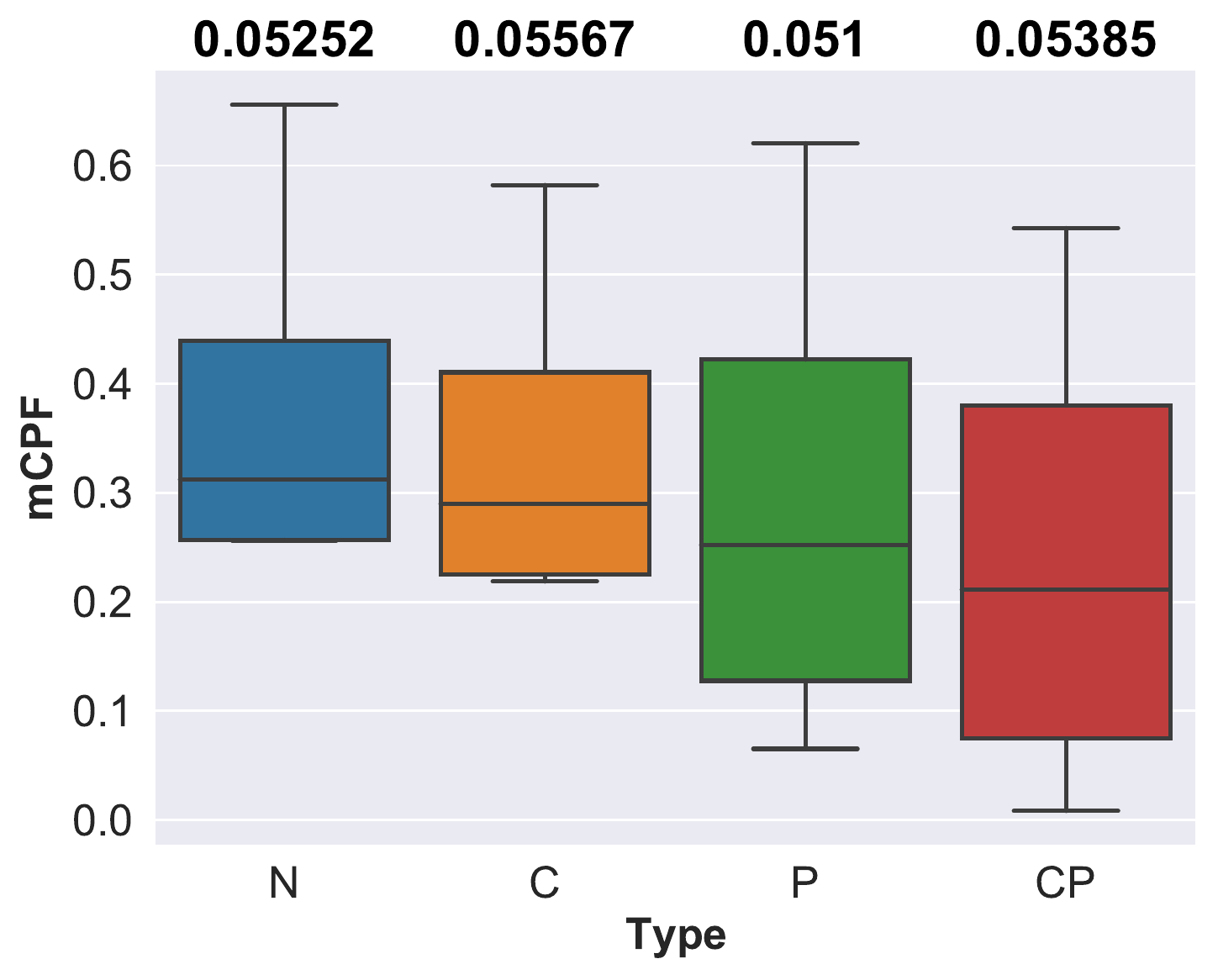}
    \label{fig:CPevalBoxPlot_Gowalla}}
  \hfill
  \subfloat[LastFM]
    {\includegraphics[scale=0.27]{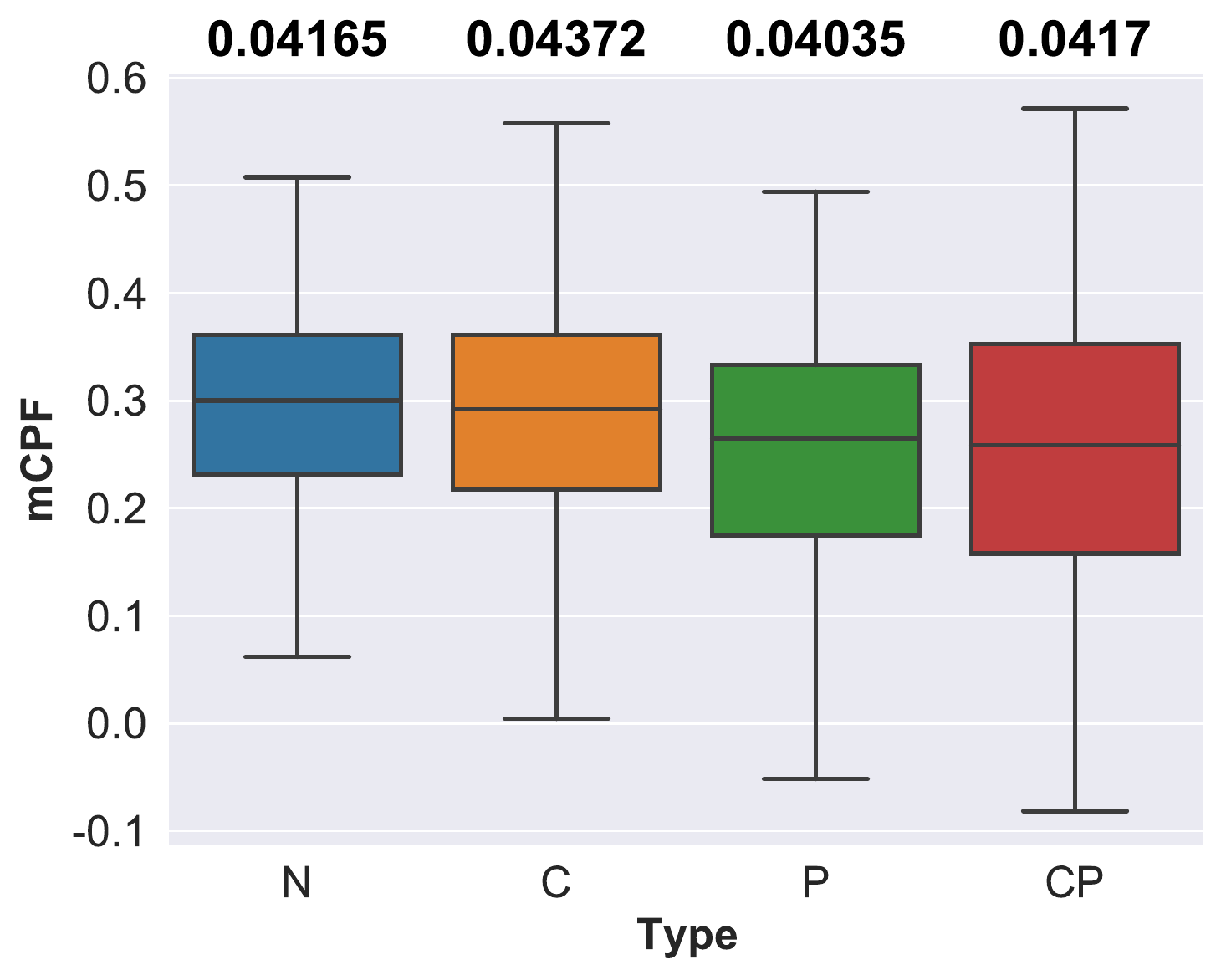}
    \label{fig:CPevalBoxPlot_LastFM}}
  \hfill
  \subfloat[Foursquare]
    {\includegraphics[scale=0.27]{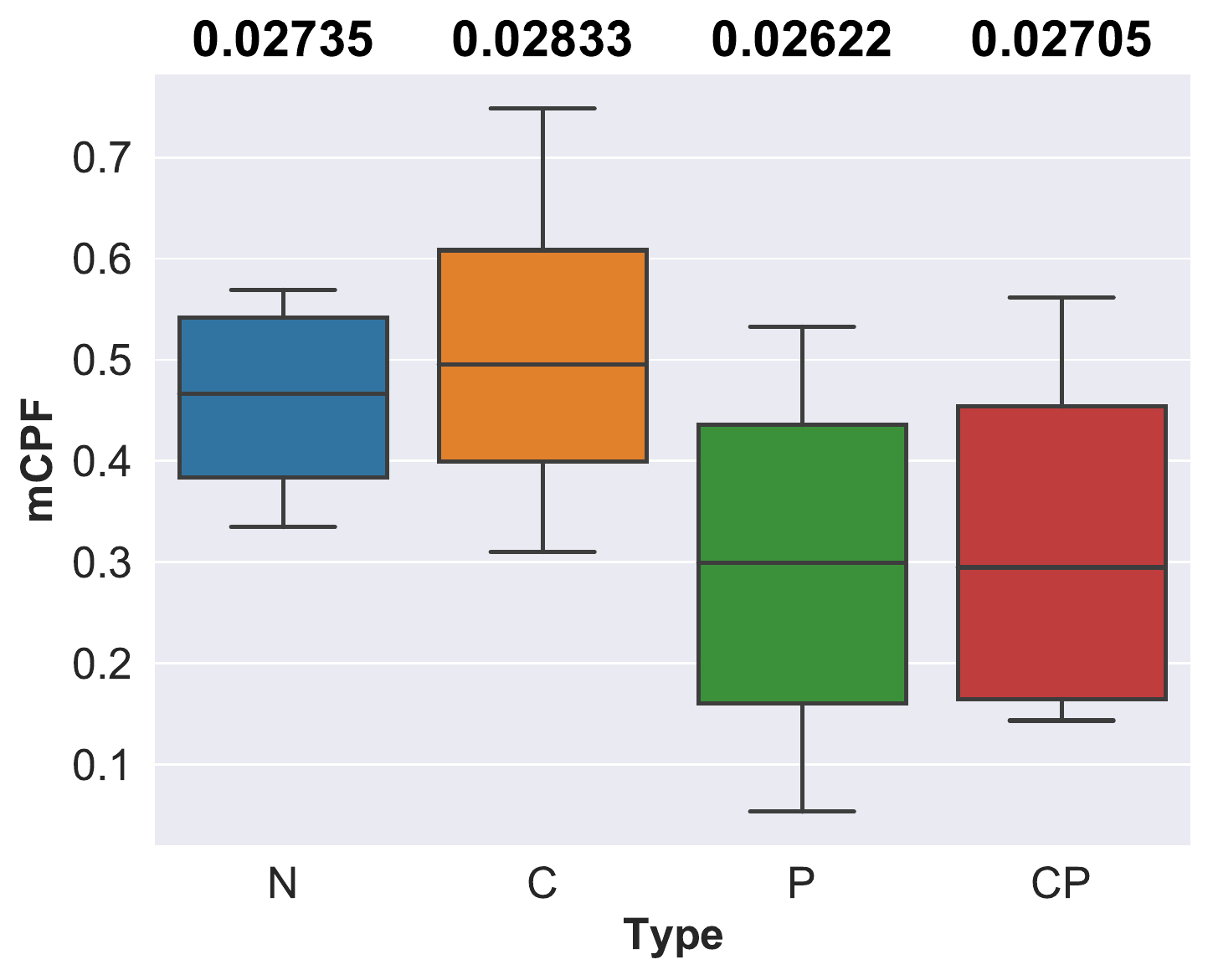}
    \label{fig:CPevalBoxPlot_Fourquare}}
  \caption{Distributions of mCPF for fairness-unaware (N) and fairness-aware methods (i.e., C, P, and CP) on all 8 datasets. The numbers on top of each plot show the overall performance (i.e., nDCG@10 for all users) according to each fairness methodology.}
\label{fig:CPevalBoxPlot}
\end{figure*}

\begin{figure*}
  \centering
  \subfloat[User Groups]
    {\includegraphics[scale=0.27]{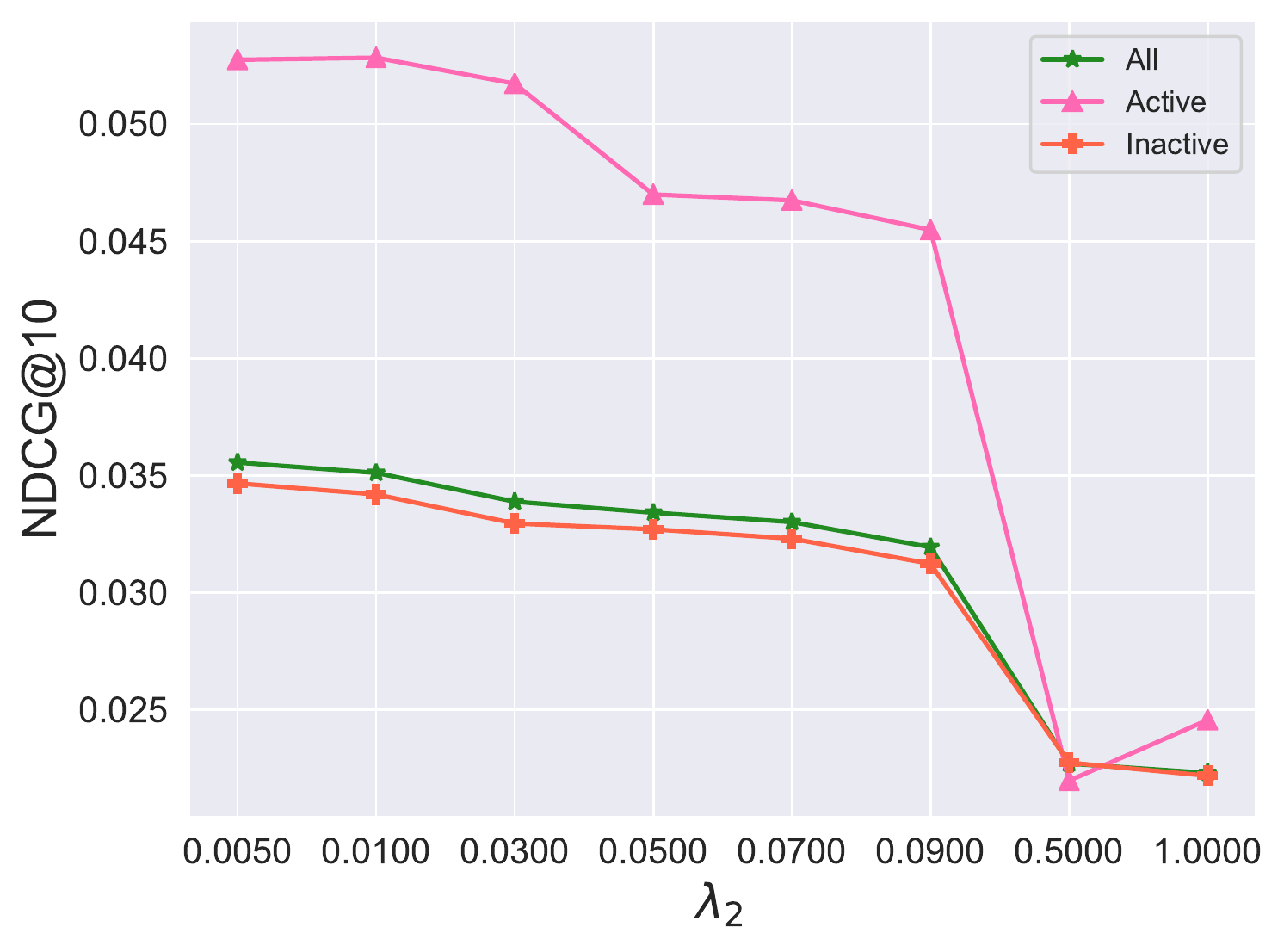}
    \label{fig:ablation_user_ueps}}
  \hfill
  \subfloat[User Groups]
    {\includegraphics[scale=0.27]{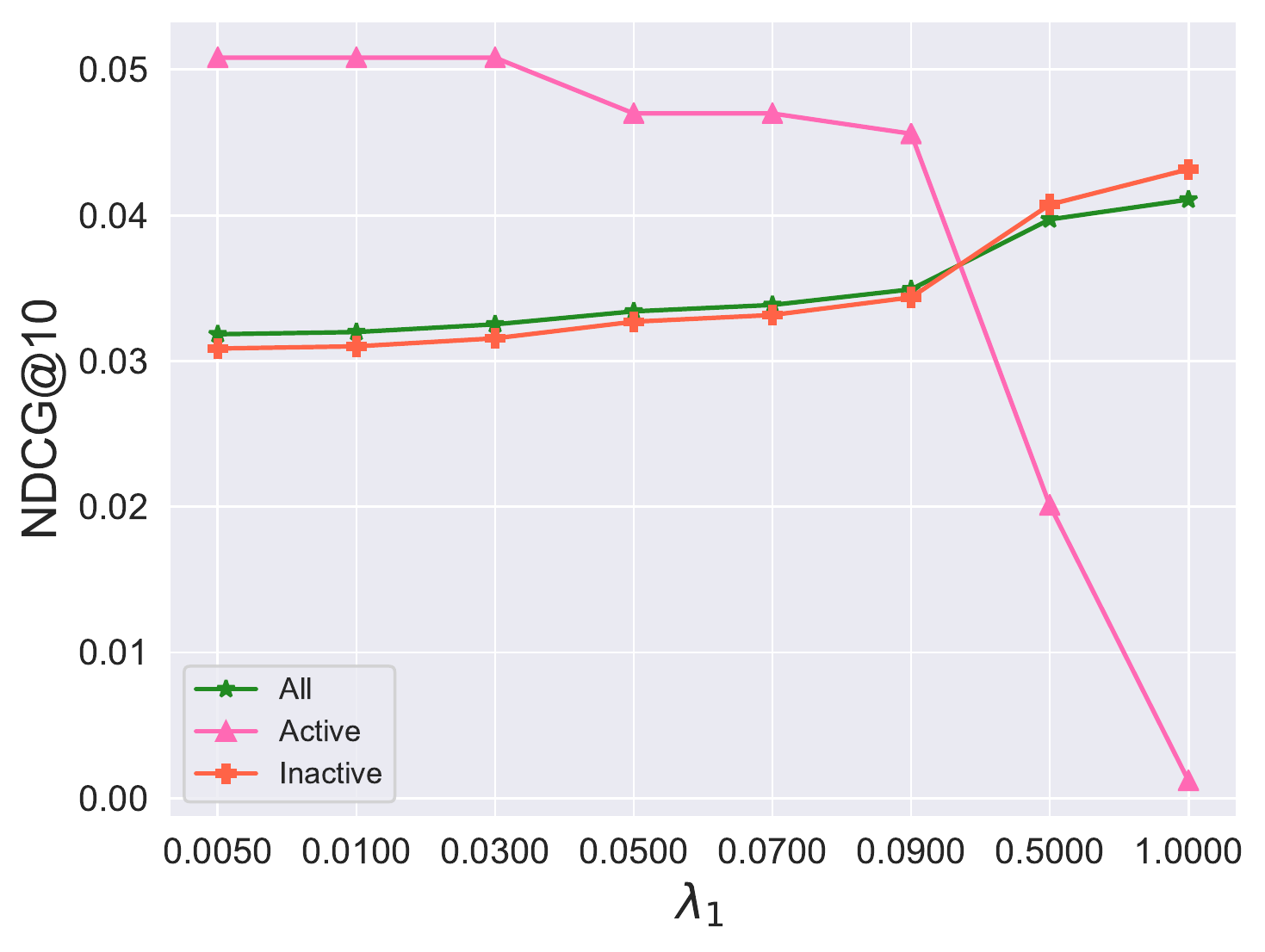}
    \label{fig:ablation_user_ieps}}
    \hfill
  \subfloat[Item Groups]
    {\includegraphics[scale=0.27]{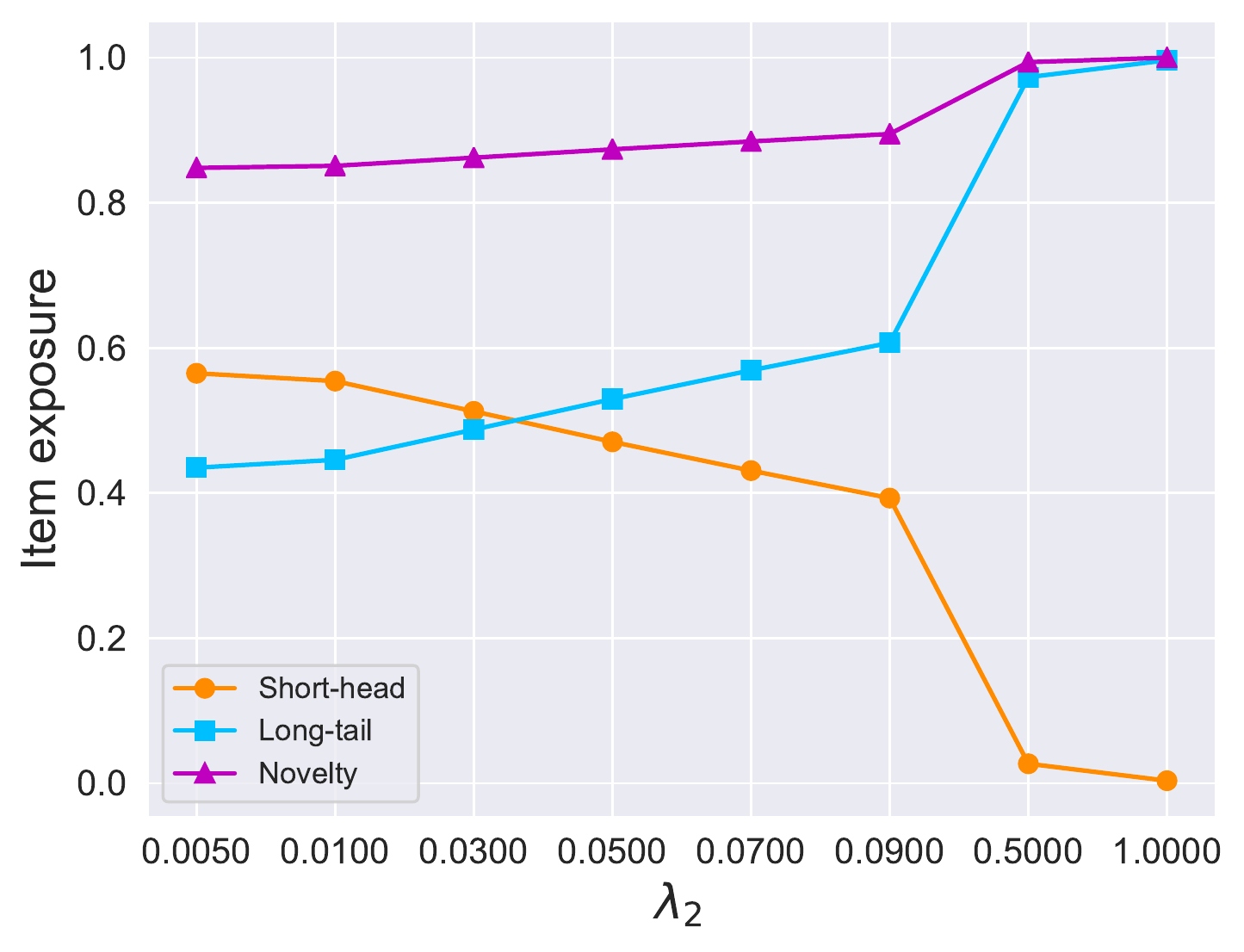}
    \label{fig:ablation_item_ueps}}
   \hfill
  \subfloat[Item Groups]
    {\includegraphics[scale=0.27]{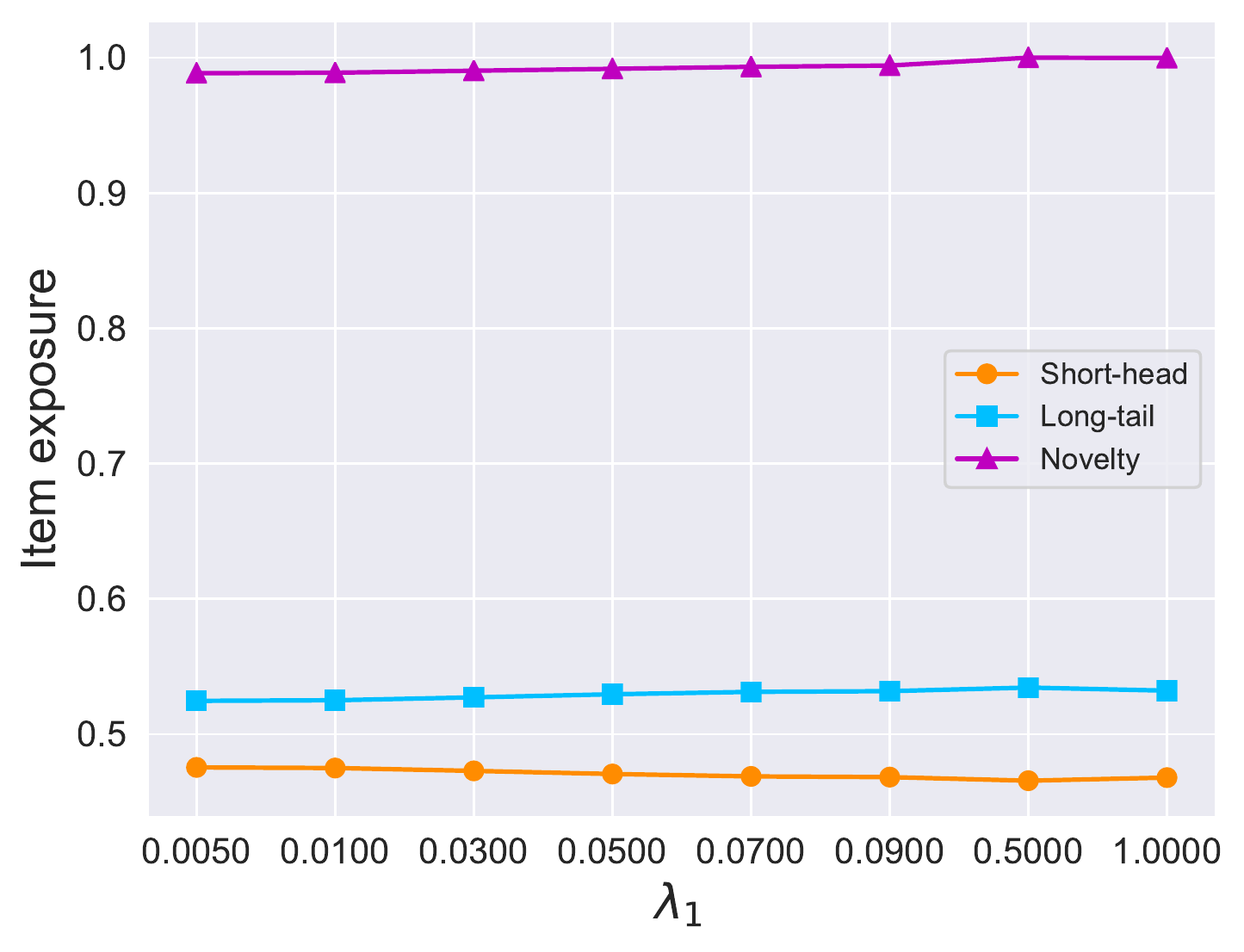}
    \label{fig:ablation_item_ieps}}
  \caption{The metric nDCG@10 and item exposure change on CP-fairness with respect to the $\lambda_1$ and $\lambda_2$ on all, \usergroupA and \usergroupB user groups and \itemgroupA and \itemgroupB item group. In each figure, the other $\lambda$ is equal to $0.05$.}
\label{fig:ablationGowallaWMF}
\end{figure*}

To gain a better understanding of the merits of the proposed CP fairness optimization framework, through the course of experiments, we intend to answer the following evaluation questions:
\begin{description}
    \item \textbf{Q1}: When comparing different \underline{fairness-unaware} baseline CF recommendation models, is there an underlying relationship between the fairness constraints of user-relevance, fairness in item exposure, and the overall system accuracy?
    \item \textbf{Q2}: How does \underline{one-sided} fairness optimization (C-fairness~\cite{li2021user,islam2021debiasing} or P-Fairness~\cite{gomez2022provider,boratto2021interplay}), which has been the main subject of \underline{previous research}, affect the fairness constraints of other stakeholders, for example, if maximizing P-Fairness (i.e., reducing DPF) results in a loss in the user-fairness and the total system accuracy?
    \item \textbf{Q3}: Does the \underline{proposed CP-fairness} algorithm in this work meet the requirements for fairness on both the consumer and supplier sides simultaneously, and whether this comes at a cost in terms of overall system accuracy or not necessarily? 
\end{description}
 
\partitle{Results and Discussion.} We begin our experimental study by addressing the above evaluation questions. We report the detailed performance results of two datasets \datasetB and \datasetC (out of eight) in Table~\ref{tbl:results_epinion}, due to space considerations. However, using the boxplots provided in Figure~\ref{fig:CPevalBoxPlot}, we are able to represent the aggregate average findings across all of the datasets in a graphical format. 

\partitle{\textit{Answer to Q1.}}
% \noindent \textit{Answer to Q1.} 
The answer to this question may be validated in Section~\ref{sec:motivating_cp} and by examining the column type \squotes{N} in Table~\ref{tbl:results_epinion}. We calculate the aggregate average values over four CF models (PF, WMF, NeuMF and VACEF) in the \squotes{N} category to draw the following findings. The mean value on the columns (Overall Acc., DCF, DPF) are equal to, for \datasetB=$(0.0362, 0.3797, 0.8404)$ and \datasetC=$(0.0525, 0.2205, 0.4576)$. In addition, to find the best-performing methods, we use the normalized user-item biases over accuracy, $mDCF/All$, which for the four CF models on the  \datasetB dataset and \datasetC datasets respectively equals to: ($19.3458$, $31.8979$, $15.9441$, $14.1194$) and ($4.3395$, $7.5858$, $11.6448$, $6.0428$). Note that the lower these values, the more capable the model is is in balancing CP-fairness constraints and the overall system accuracy.

\partitle{\underline{\textbf{Observation.}}}
% \underline{\textbf{Observation.}} 
By examining the aggregated average results, we can clearly note that the data contain inherent biases and imbalances that negatively impact consumers and suppliers. The bias in \datasetB is substantially worse than that in \datasetC; for example, compare the differences in mean DCF: (0.3797, 0.2205) and mean DPF: (0.8404, 0.4576) across the four baseline CF models. The differences in values may be answered based on the underlying data characteristics~\cite{deldjoo2021explaining}.

Considering the trade-off between accuracies and CP-fairness, the first observation is that some algorithms (e.g., NeuMF) are more prone to amplifying the bias, and some less. For instance, in \datasetB, NeuMF achieves the best accuracy (0.0447), however, by hugely sacrificing CP-fairness or mCPF (from 0.621 to 0.7127). When \textbf{relative fairness-accuracy} is concerned, according to the $mDCF/All$ column, NeuMF still achieves the second-best performance $mCPF/All = 15.94$ in \datasetB. However, in \datasetC, this methods has the \underline{worst} performance with $mCPF/All = 11.64$. These results imply that the performance of some CF models (as in this case NeuMF) varies substantially as the characteristics of the underlying dataset are altered, while some (such as VACEF) are more robust to these variations. For instance, the VACEF technique outperforms all other methods in establishing the optimal balance of CP-Fairness and accuracies, placing second in \datasetB and first in \datasetC. These findings and insights underscore the critical importance of algorithmic consumer and supplier fairness in preserving a two-sided eco-system, which is the focus of the current research.

\partitle{\textit{Answer to Q2.}}
The aggregate mean over four CF baseline algorithms (PF, WMF, NeuMF and VACEF) in one-sided fairness types, C and P on columns (DCF, DPF, mCPF) is respectively equal to: ($0.3439$, $0.8382$, $0.6522$) and ($0.3877$, $0.7135$, $0.6196$), in \datasetB and (0.16708, 0.45525, 0.34525) and (0.22728, 0.27515, 0.29758) in \datasetC. The baseline N on the two datasets are equal to (0.3797, 0.8404, 0.6776) and (0.22055, 0.45760, 0.38408) respectively. %\yashar{could we add baselines since we need them here.}

\partitle{\underline{\textbf{Observation.}}}
These results which were acquired using datasets \datasetB and \datasetC suggest that the negative effect of P-fairness optimization on user-fairness often tends to be stronger than the effect in the other direction. In \datasetB, consider the deterioration of DCF $0.3797 \rightarrow 0.3877$ when we are focused on P-Fairness optimization, while when focused on C-fairness we see DPF is not only worsened but also improved $0.8404 \rightarrow 0.8382$. On the other hand, on \datasetC we can see the worsening of DPF i.e., $0.45760 \rightarrow 0.45525 $.

Overall, these result which are obtained on \datasetB and \datasetC suggest that one-sided P-fairness optimization tend to harm consumer fairness more than the other way around. We deem this to be related to the segmentation threshold to determine active vs. non-active users. Given that, active users account for only 5\% of the user population, optimizing for user-relevance have little impact on the exposition of item groups.  When combined consumer-producer fairness bias is concerned, as measured with $mCPF$, we can see the effects more clearly in the boxplots in Figure \ref{fig:CPevalBoxPlot}.  According to the results, the general trend is that optimizing on P-fairness would result in a more considerable reduction of the overall fairness, mCPF in comparison to C-fairness optimization.

\partitle{\textit{Answer to Q3.}}
This question can be answered by observing the pattern in $\Delta(\%)$ values, \ie  the improvement of $mCPF$ in comparison to the fairness-unaware N model. The aggregate mean values of $\Delta(\%)$ across all baseline algorithms for types C, P, CP on \datasetB and \datasetC respectively are (3.83\%, 9.25\%, 12.84\%) and (10.36\%, 31.28\%, 46.90\%) respectively. Moreover, the average value for $mCPF/All$ in (N-model, CP-fairness) model in Epinion and Gowalla are (18.73, 16.53) and (7.31, 4.51). To further compare the overall quality of CP-fairness across the baseline algorithms, we can observe the normalized user-item biases over accuracy defined as $mCPF/All$ for (PF, WMF, NeuMF, VAECF) which are (13.0160, 31.3632, 14.5451, 13.1754) and (0.1343, 2.8952, 9.3087, 5.2617) for Epinion and Gowalla dataset respectively.

\partitle{\underline{\textbf{Observation.}}}
Investigating $\Delta(\%)$ values, it is vivid that our CP-fairness method has the ability to significantly reduce the fairness disparity between user and item groups, jointly evaluated by mCPF metric, when compared with single-sided fairness regardless of the baseline recommendation algorithm. Another observation by comparing $mCPF/All$ values among the CP-fairness model and N baselines is that the proposed CP-fairness model achieves this improvement without sacrificing the overall accuracy of baselines. Among the baseline algorithms, PF is showing the best performance for balancing fairness and accuracy aspects on both Gowalla and Epinion datasets shown from the $mCPF/All$ values. Further, it can be seen that although NeuMF and VAECF are capable of reaching higher overall accuracy, this comes with the cost of significant disparity among producers and amplifying the existing data bias.  

To further evaluate the effectiveness of our proposed CP-fairness re-ranking algorithm in enhancing the user and item fairness requirement in a two-sided marketplace, we performed an extensive experiment (128 experimental cases) on eight real-world datasets. Figure \ref{fig:CPevalBoxPlot} shows the distribution of mCPF performance among all datasets using the four baseline algorithms, \ie PF, WMF, NeuMF, and VAECF. As shown in Figure \ref{fig:CPevalBoxPlot}, in all datasets, the CP-fairness model achieves lower mCPF compared to baselines and unilateral fairness models. It is worth noting that unilateral fairness models show different behavior across datasets which is due to the underlying data characteristic. For example, in Figure \ref{fig:CPevalBoxPlot_MovieLens}, the C-fairness method achieves better results in comparison to P-fairness, while in Figure \ref{fig:CPevalBoxPlot_Fourquare} on the Foursquare dataset, the results on the P-fairness approach is much better than the C-fairness approach.
Furthermore, the experiment results demonstrate that our CP re-ranking algorithm can not only shrink the fairness disparity between the two groups of users and items but also provide overall competitive accuracy (nDCG@10) in comparison to the unilateral fairness methods and baselines as reported on top of the plots in Figure \ref{fig:CPevalBoxPlot}.

\partitle{Ablation Study.}
\label{sec:ablation}
We also analyze the impact of the optimization regularization parameters $\lambda_1$ and $\lambda_2$ from Equation \ref{eq:optimization}. We expect that the bigger $\lambda_1$ and $\lambda_2$ are, the fairer our proposed model will be. However, the excessive pursuit of fairness is unnecessary and could have an adverse impact on the overall recommendation performance. Therefore, we are interested in studying how different values of $\lambda_1$ and $\lambda_2$ in Equation \ref{tbl:results_epinion} can influence the total performance of the system, \ie total accuracy, and novelty of the recommendation, as well as their impact on consumer and provider groups.

The results presented in Figure \ref{fig:ablationGowallaWMF} indicate that $\lambda_1$ exhibits a more \dquotes{accuracy-centric} behavior, implying that this parameter affects user-fairness and overall system accuracy with little impact on the items' expositions. 
This can be an indication that active and inactive users do not have much difference in the type of items they use (in terms of popularity). On the other hand, $\lambda_2$ exhibits an \dquotes{exposure-centric} that can impact both the beyond-accuracy and accuracy of the system. The reason why an excessive rise in $\lambda_2$ does not benefit us of the overall accuracy can be explained by the fact that many long-tail items lack sufficient preference scores (interactions), making it unclear whether including these items in recommendation lists will actually result in user appreciation. In general, striking the right balance when selecting model parameters is critical for increasing the fairness and overall utility of the marketplace.
\section{Conclusions}
\label{sec:conclusion}
In this paper, we study fairness-aware recommendation algorithms from both the user and item perspectives. We first show that current recommendation algorithms produce unfair recommendation between different disadvantaged user and item groups due to the natural biases and imperfections in the underlying user interaction data. To address these issues, we propose a personalized CP-fairness constrained re-ranking method to mitigate the unfairness of both user and item groups while maintaining the recommendation quality. Extensive experiments across eight datasets indicate that our method can reduce the unfair outcomes on beneficiary stakeholders, consumers and providers, while improving the overall recommendation quality. This demonstrates that algorithmic fairness may play a critical role in mitigating data biases that, if left uncontrolled, can result in societal discrimination and polarization of opinions and retail businesses.

% \partitle{Data and Code.}
% To facilitate reproducibility of our work, all code and parameters are shared at \url{http://www.github.com}

%% the bibliography file.
\bibliographystyle{ACM-Reference-Format}
\bibliography{references}

\end{document}